\documentclass[aps,prresearch,reprint,superscriptaddress]{revtex4-2}
\usepackage{amsmath,amsfonts,amssymb}
\usepackage{mathtools}
\usepackage{multirow}
\usepackage{ulem}

\usepackage[utf8]{inputenc}

\usepackage[T1]{fontenc}

\usepackage{bm}
\usepackage{times}

\usepackage{mathtools}
\usepackage{amsmath}
\usepackage[shortlabels]{enumitem}

\usepackage{graphicx,epic,eepic,epsfig,amsmath,latexsym,amssymb,verbatim,color}
 
\usepackage{amsfonts}       
\usepackage{nicefrac}       

\usepackage{amsmath}
\usepackage{bbm}

\usepackage{float}
\usepackage{tikz}
\usetikzlibrary{chains}
\usetikzlibrary{fit}
\usetikzlibrary{arrows}

\usepackage{epsfig}
\usetikzlibrary{shapes.symbols,patterns} 
\usepackage{pgfplots}
\pgfplotsset{compat=1.18}

\usepackage[strict]{changepage}
\usepackage{hyperref}
\hypersetup{colorlinks=true,citecolor=blue,linkcolor=blue,filecolor=blue,urlcolor=blue,breaklinks=true}

\usepackage[marginal]{footmisc}
\usepackage{url}
\usepackage{theorem}

\newtheorem{definition}{Definition}
\newtheorem{proposition}{Proposition}
\newtheorem{lemma}[proposition]{Lemma}

\newtheorem{theorem}[proposition]{Theorem}

\newtheorem{corollary}[proposition]{Corollary}


\def\squareforqed{\hbox{\rlap{$\sqcap$}$\sqcup$}}
\def\qed{\ifmmode\squareforqed\else{\unskip\nobreak\hfil
\penalty50\hskip1em\null\nobreak\hfil\squareforqed
\parfillskip=0pt\finalhyphendemerits=0\endgraf}\fi}
\def\endenv{\ifmmode\;\else{\unskip\nobreak\hfil
\penalty50\hskip1em\null\nobreak\hfil\;
\parfillskip=0pt\finalhyphendemerits=0\endgraf}\fi}
\newenvironment{proof}{\noindent \textbf{{Proof~} }}{\hfill $\blacksquare$}

\newcounter{remark}

\newcounter{example}

\mathchardef\ordinarycolon\mathcode`\:
\mathcode`\:=\string"8000
\def\vcentcolon{\mathrel{\mathop\ordinarycolon}}
\begingroup \catcode`\:=\active
  \lowercase{\endgroup
  \let :\vcentcolon
  }

\usepackage{cleveref}
\usepackage{graphicx}
\usepackage{xcolor}

\RequirePackage[framemethod=default]{mdframed}
\newmdenv[skipabove=7pt,
skipbelow=7pt,
backgroundcolor=darkblue!15,
innerleftmargin=5pt,
innerrightmargin=5pt,
innertopmargin=5pt,
leftmargin=0cm,
rightmargin=0cm,
innerbottommargin=5pt,
linewidth=1pt]{tBox}

\newmdenv[skipabove=7pt,
skipbelow=7pt,
backgroundcolor=blue2!25,
innerleftmargin=5pt,
innerrightmargin=5pt,
innertopmargin=5pt,
leftmargin=0cm,
rightmargin=0cm,
innerbottommargin=5pt,
linewidth=1pt]{dBox}
\newmdenv[skipabove=7pt,
skipbelow=7pt,
backgroundcolor=darkkblue!15,
innerleftmargin=5pt,
innerrightmargin=5pt,
innertopmargin=5pt,
leftmargin=0cm,
rightmargin=0cm,
innerbottommargin=5pt,
linewidth=1pt]{sBox}
\definecolor{darkblue}{RGB}{0,76,156}
\definecolor{darkkblue}{RGB}{0,0,153}
\definecolor{blue2}{RGB}{102,178,255}
\definecolor{darkred}{RGB}{195,0,0}

\newcommand{\nc}{\newcommand}
\nc{\rnc}{\renewcommand}
\nc{\lbar}[1]{\overline{#1}}
\nc{\bra}[1]{\langle#1|}
\nc{\ket}[1]{|#1\rangle}
\nc{\ketbra}[2]{|#1\rangle\!\langle#2|}
\nc{\braket}[2]{\langle#1|#2\rangle}

\nc{\proj}[1]{| #1\rangle\!\langle #1 |}
\nc{\avg}[1]{\langle#1\rangle}
\nc{\rank}{\operatorname{Rank}}
\nc{\smfrac}[2]{\mbox{$\frac{#1}{#2}$}}
\nc{\tr}{\operatorname{Tr}}
\nc{\ox}{\otimes}
\nc{\dg}{\dagger}
\nc{\dn}{\downarrow}
\nc{\cA}{{\cal A}}
\nc{\cB}{{\cal B}}
\nc{\cC}{{\cal C}}
\nc{\cD}{{\cal D}}
\nc{\cE}{{\cal E}}
\nc{\cF}{{\cal F}}
\nc{\cG}{{\cal G}}
\nc{\cH}{{\cal H}}
\nc{\cI}{{\cal I}}
\nc{\cJ}{{\cal J}}
\nc{\cK}{{\cal K}}
\nc{\cL}{{\cal L}}
\nc{\cM}{{\cal M}}
\nc{\cN}{{\cal N}}
\nc{\cO}{{\cal O}}
\nc{\cP}{{\cal P}}
\nc{\cQ}{{\cal Q}}
\nc{\cR}{{\cal R}}
\nc{\cS}{{\cal S}}
\nc{\cT}{{\cal T}}
\nc{\cU}{{\cal U}}
\nc{\cV}{{\cal V}}
\nc{\cX}{{\cal X}}
\nc{\cY}{{\cal Y}}
\nc{\cZ}{{\cal Z}}
\nc{\cW}{{\cal W}}
\nc{\csupp}{{\operatorname{csupp}}}
\nc{\qsupp}{{\operatorname{qsupp}}}
\nc{\var}{{\operatorname{var}}}
\nc{\rar}{\rightarrow}
\nc{\lrar}{\longrightarrow}
\nc{\polylog}{{\operatorname{polylog}}}
\nc{\wt}{{\operatorname{wt}}}
\nc{\av}[1]{{\left\langle {#1} \right\rangle}}
\nc{\supp}{{\operatorname{supp}}}

\nc{\argmin}{{\operatorname{argmin}}}

\def\x{\xi}

\nc{\RR}{{{\mathbb R}}}
\nc{\CC}{{{\mathbb C}}}
\nc{\FF}{{{\mathbb F}}}
\nc{\NN}{{{\mathbb N}}}
\nc{\ZZ}{{{\mathbb Z}}}
\nc{\PP}{{{\mathbb P}}}
\nc{\QQ}{{{\mathbb Q}}}
\nc{\UU}{{{\mathbb U}}}
\nc{\EE}{{{\mathbb E}}}
\nc{\id}{{\operatorname{id}}}

\nc{\CHSH}{{\operatorname{CHSH}}}

\nc{\be}{\begin{equation}}
\nc{\ee}{{\end{equation}}}
\nc{\bea}{\begin{eqnarray}}
\nc{\eea}{\end{eqnarray}}
\nc{\<}{\langle}
\rnc{\>}{\rangle}
\nc{\rU}{\mbox{U}}

\nc{\ob}[1]{#1}

\nc{\SEP}{{\text{\rm SEP}}}
\nc{\NS}{{\text{\rm NS}}}
\nc{\LOCC}{{\text{\rm LOCC}}}
\nc{\PPT}{{\text{\rm PPT}}}
\nc{\EXT}{{\text{\rm EXT}}}
\nc{\Sym}{{\operatorname{Sym}}}


\nc{\ERLO}{{E_{\text{r,LO}}}}
\nc{\ERLOCC}{{E_{\text{r,LOCC}}}}
\nc{\ERPPT}{{E_{\text{r,PPT}}}}
\nc{\ERLOCCinfty}{{E^{\infty}_{\text{r,LOCC}}}}
\nc{\Aram}{{\operatorname{\sf A}}}
\newtheorem{problem}{Problem}

\newcommand{\Choi}{Choi-Jamio\l{}kowski }

\usepackage{tikz}
\usepackage{hyperref}
\hypersetup{colorlinks=true,citecolor=blue,linkcolor=blue,filecolor=blue,urlcolor=blue,breaklinks=true}

\makeatletter
\def\grd@save@target#1{%
  \def\grd@target{#1}}
\def\grd@save@start#1{%
  \def\grd@start{#1}}
\tikzset{
  grid with coordinates/.style={
    to path={%
      \pgfextra{%
        \edef\grd@@target{(\tikztotarget)}%
        \tikz@scan@one@point\grd@save@target\grd@@target\relax
        \edef\grd@@start{(\tikztostart)}%
        \tikz@scan@one@point\grd@save@start\grd@@start\relax
        \draw[minor help lines,magenta] (\tikztostart) grid (\tikztotarget);
        \draw[major help lines] (\tikztostart) grid (\tikztotarget);
        \grd@start
        \pgfmathsetmacro{\grd@xa}{\the\pgf@x/1cm}
        \pgfmathsetmacro{\grd@ya}{\the\pgf@y/1cm}
        \grd@target
        \pgfmathsetmacro{\grd@xb}{\the\pgf@x/1cm}
        \pgfmathsetmacro{\grd@yb}{\the\pgf@y/1cm}
        \pgfmathsetmacro{\grd@xc}{\grd@xa + \pgfkeysvalueof{/tikz/grid with coordinates/major step}}
        \pgfmathsetmacro{\grd@yc}{\grd@ya + \pgfkeysvalueof{/tikz/grid with coordinates/major step}}
        \foreach \x in {\grd@xa,\grd@xc,...,\grd@xb}
        \node[anchor=north] at (\x,\grd@ya) {\pgfmathprintnumber{\x}};
        \foreach \y in {\grd@ya,\grd@yc,...,\grd@yb}
        \node[anchor=east] at (\grd@xa,\y) {\pgfmathprintnumber{\y}};
      }
    }
  },
  minor help lines/.style={
    help lines,
    step=\pgfkeysvalueof{/tikz/grid with coordinates/minor step}
  },
  major help lines/.style={
    help lines,
    line width=\pgfkeysvalueof{/tikz/grid with coordinates/major line width},
    step=\pgfkeysvalueof{/tikz/grid with coordinates/major step}
  },
  grid with coordinates/.cd,
  minor step/.initial=.2,
  major step/.initial=1,
  major line width/.initial=2pt,
}
\makeatother

\usepackage{thmtools}
\usepackage{thm-restate}
\usepackage{etoolbox}
\makeatletter
\def\problem@s{}
\newcounter{problems@cnt}

\newcommand{\allproblems}{\problem@s}
\makeatother

\definecolor{beamer}{rgb}{0.2,0.2,0.7}
\definecolor{colorone}{rgb}{1,0.36,0.03}
\definecolor{colortwo}{rgb}{0.4,0.77,0.17}
\definecolor{colorthree}{rgb}{0.01,0.51,0.93}
\definecolor{colorfour}{rgb}{0.47,0.26,0.58}
\definecolor{colorfive}{rgb}{0.12,0.55,0.16}
\usepackage{tcolorbox}
\usepackage{relsize}
\usepackage{graphicx}
\usepackage{booktabs}
\usepackage{amsmath}
\usepackage{tikz}
\usetikzlibrary{quantikz}
\nc{\st}{\text{subject to} \ }
\nc{\supre}{\text{supremum} \ }
\nc{\sdp}{\text{sdp}}
\usepackage{array}

\newcommand{\update}[1]{\textcolor{black}{#1}}
\newcommand{\reupdate}[1]{\textcolor{black}{#1}}
\allowdisplaybreaks

\nc{\ith}[1]{{#1}^\mathrm{th}}

\begin{document}
\title{Probabilistic channel simulation using coherence}
 
\author{Benchi Zhao}
\email{benchizhao@gmail.com}
\affiliation{Graduate School of Engineering Science, Osaka University, 1-3 Machikaneyama, Toyonaka, Osaka 560-8531, Japan}

\author{Kosuke Ito}
\email{kosuke.ito.qiqb@osaka-u.ac.jp}
\affiliation{Center for Quantum Information and Quantum Biology, Osaka University, 1-2 Machikaneyama, Toyonaka 560-8531, Japan}

\author{Keisuke Fujii}
\email{fujii@qc.ee.es.osaka-u.ac.jp}
\affiliation{Graduate School of Engineering Science, Osaka University, 1-3 Machikaneyama, Toyonaka, Osaka 560-8531, Japan}
\affiliation{Center for Quantum Information and Quantum Biology, Osaka University, 1-2 Machikaneyama, Toyonaka 560-8531, Japan}
\affiliation{RIKEN Center for Quantum Computing (RQC), Hirosawa 2-1, Wako, Saitama 351-0198, Japan}

\begin{abstract}
Channel simulation using coherence, which refers to realizing a target channel with coherent states and free operations, is a fundamental problem in the quantum resource theory of coherence. The limitations of the accuracy of deterministic channel simulation motivate us to consider the more general probabilistic framework. In this paper, we develop the framework for probabilistic channel simulation using coherence with free operations. When the chosen set of free operations is the maximally incoherent operations, we provide an efficiently computable semidefinite program (SDP) to calculate the maximal success probability and derive the analytic expression of success probability for some special cases. When the chosen set of free operations is the dephasing-covariant incoherent operations (DIO), we show that if the target channel is not a resource nonactivating channel, then one cannot simulate it exactly both deterministically and probabilistically. The SDP for maximal success probability of simulating a channel by DIO is also given correspondingly.

\end{abstract}

\maketitle

\section{Introduction}
In recent years, many efforts have contributed to establishing the framework of quantum resource theories~\cite{chitambar2019quantum, horodecki2013quantumness} to understand the unique properties of quantum mechanical systems such as coherence~\cite{streltsov2017colloquium, aberg2006quantifying, fang2018probabilistic, regula2018one, zhao2018one, napoli2016robustness, baumgratz2014quantifying, chitambar2016comparison, piani2016robustness, kelly2023coherence,tajima2024gibbs}, entanglement~\cite{horodecki2009quantum, vedral1997quantifying, wang2020cost, chen2023near,zhu2023estimate}, and magic~\cite{howard2017application, bravyi2016trading, wang2019quantifying, wang2020efficiently, bravyi2012magic, zhu2023limitations}. \update{In general, a resource theory is defined by specifying free states and free operations}. Free states are states that do not possess the resource under consideration, while free operations are operations that preserve \update{the set of free states}. Taking the resource theory of entanglement as an example~\cite{horodecki2009quantum, vedral1997quantifying, wang2020cost, chen2023near,zhu2023estimate}, the free states are separable states, which are not entangled, and one of the free operation sets is local operation and classical communication (LOCC)~\cite{chitambar2014everything, nielsen2010quantum}, which does not generate entanglement. Similar to entanglement, coherence is another important topic in quantum resource theories~\cite{chitambar2019quantum}, which refers to the property of the superposition of states. It empowers various quantum tasks, such as cryptography~\cite{coles2016numerical}, metrology~\cite{giovannetti2011advances, frowis2011stable, toth2014quantum}, thermodynamics~\cite{lostaglio2015description, brandao2013resource, gour2015resource}, and channel simulation~\cite{dana2017resource,diaz2018using}.

In the resource theory of coherence~\cite{streltsov2017colloquium}, the free states are defined as classical states, i.e., density operators that are diagonal in a given reference orthogonal basis $\{\ket{i}\}$. Such states are called \textit{incoherent states} and denoted as $\cI$. The corresponding \textit{maximally coherent state} in dimension $m$ is the state $\ket{\Psi_m}=\frac{1}{\sqrt{m}}\sum_{j=0}^{m-1}\ket{j}$. In this work, we denote the density matrix of a maximally coherent state with rank $m$ as $\Psi_m = \proj{\Psi_m}$ for convenience. 
\reupdate{The resource theory of coherence does not have a gold-standard physics-motivated class of operations like LOCC in the entanglement resource theory.}
\reupdate{With this in mind, our task is}
to characterize the operational properties and applications of quantum coherence under several different sets of operations, such as dephasing-covariant incoherent operations (DIO)~\cite{chitambar2016critical}, and maximally incoherent operations (MIO)~\cite{aberg2006quantifying}. \reupdate{An operation $\cM$ is in DIO if it commutes with $\Delta$}, or equivalently $\cM(\proj{i})\in\cI$ and $\Delta(\cM(\ketbra{i}{j}))=0,~\forall i\neq j$, where $\Delta$ is the coherence destroying map~\cite{liu2017resource} (completely dephasing channel), i.e., $\Delta(\cdot)=\sum_i \proj{i}\cdot\proj{i}$.
MIO \update{consists of} all operations $\cM$ such that $\cM(\rho)\in\cI$ for any free state $\rho\in\cI$. From the definition, DIO is a subset of MIO, i.e., DIO $\subset$ MIO.

\begin{figure}[ht]
\centering
\includegraphics[width=\linewidth]{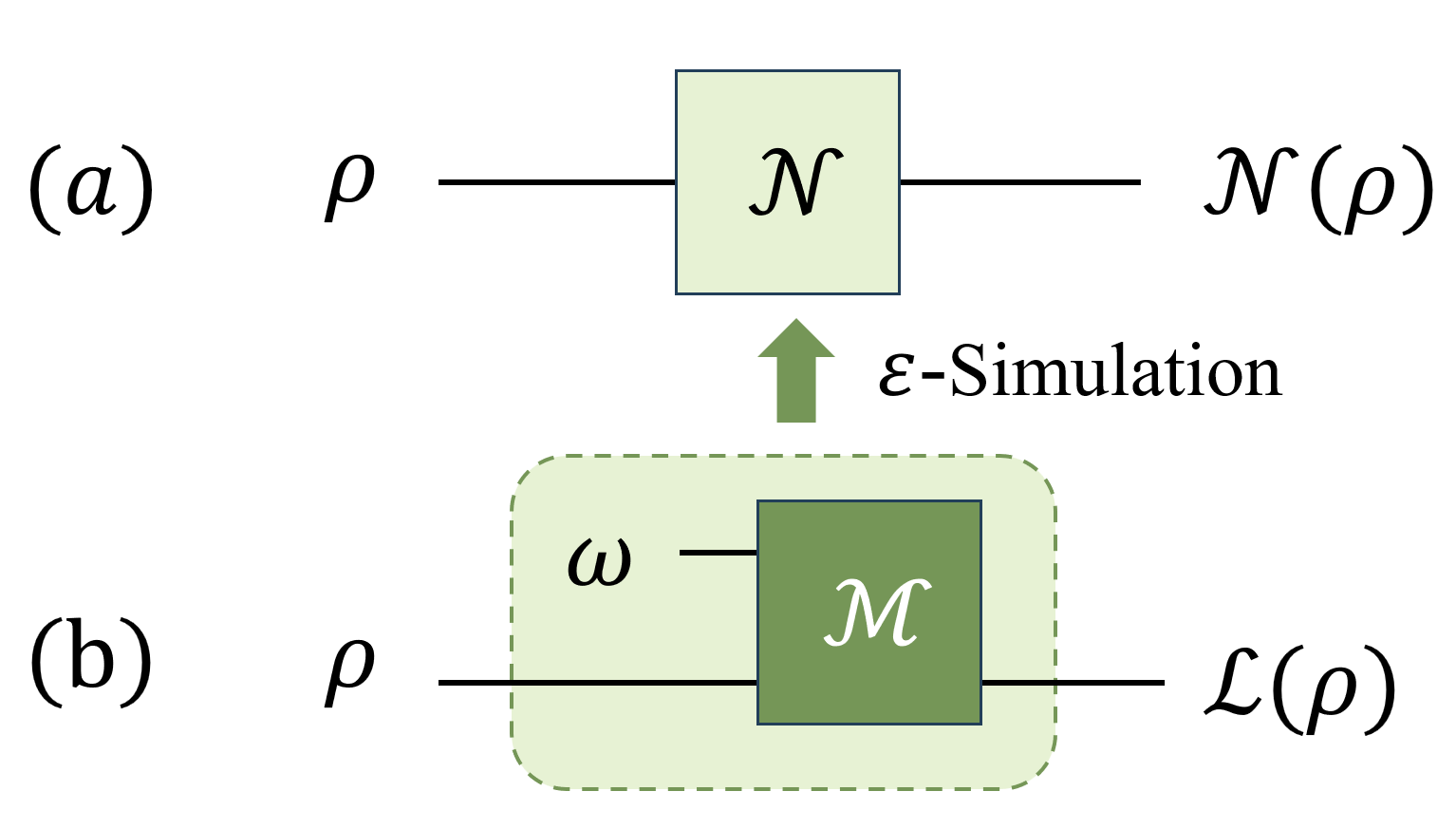}
    \caption{(a) \update{The quantum} channel $\cN$. (b) Utilizing \update{a} free operation $\cM$ and \update{a} given resource state $\omega$ to implement channel $\cL$. The realized channel $\cL$ is $\varepsilon$ close to the target channels $\cN$.}
    \label{fig:Channel_simulation}
\end{figure}

Initial research primarily focused on the quantification and interconversion of \textit{static coherence}~\cite{aberg2006quantifying,baumgratz2014quantifying,winter2016operational}, which refers to the degree of superposition within a state. \update{Later on, a more general framework was proposed, called \textit{dynamic coherence}--the power to generate coherence itself~\cite{diaz2018using}. How to convert static coherence into dynamic coherence (also known as \textit{channel simulation using coherence}, as shown in Fig.~\ref{fig:Channel_simulation}) raises great interest, and many efforts have contributed to establishing the framework of deterministic channel simulation~\cite{dana2017resource, diaz2018using,jones2023hadamard}}. At the same time, the limitations of deterministic channel simulation are also shown.

A given resource state does not necessarily allow for the exact and deterministic MIO simulation of a given target quantum channel, although for each channel there exists a state such that such a simulation is possible~\cite{dana2017resource, diaz2018using}. In fact, if we can only access the resource state $\omega$ of which the robustness of coherence (ROC) is smaller than the requirement of the channel simulation, then it is impossible to simulate the target channel with no error. Also, in Ref.~\cite{jones2023hadamard}, the authors found that any coherent unitary channel cannot be simulated by DIO exactly (e.g., the Hadamard gate) no matter what resource state $\omega$ is provided. More generally, in Ref.~\cite{theurer2019quantifying}, the authors showed that if the target channel is not a resource nonactivating channel~\cite{liu2017resource}, then it cannot be simulated by DIO exactly and deterministically.

The limitation of deterministic channel simulation motivates us to consider a more general probabilistic framework, in which the channel simulation will succeed only with some probability. The probabilistic framework has been applied to many quantum tasks and has shown advantages over deterministic ones. For example, in Ref.~\cite{regula2022probabilistic}, the author observes that one can transfer the quantum state from $\rho$ to $\sigma$ probabilistically, while such transformation is forbidden in the deterministic scenario. Moreover, in the task of coherence distillation~\cite{fang2018probabilistic}, with the same input state, the output state by probabilistic distillation is closer to the maximal coherent state than deterministic distillation.

In this paper, we focus on the probabilistic channel simulation using coherence, characterizing the relation between the maximal success probability and the distance from simulated channel $\cL$ and target channel $\cN$. In the first part, we show three cases of channel simulation with MIO depending on the types of the resource state. (i) If the resource state is the rank-$m$ maximally coherent state $\omega=\Psi_m$, we provide the analytical expression of maximal success probability. (ii) If the resource state $\omega$ is a pure coherent state, we derive the non-zero success probability. (iii) If the resource state $\omega$ is a general coherent state, we provide an efficiently computable semidefinite program (SDP) to achieve the maximal success probability. In the second part, we concentrate on the channel simulation with DIO. \update{It was proved in Ref.~\cite{theurer2019quantifying} that if the target channel $\cN$ is not a resource nonactivating channel, then it cannot be simulated exactly by DIO with any resource state deterministically. We further \update{show} that such an exact simulation is impossible even in probabilistic scenarios.} We then provide the efficiently computable SDP for achieving the maximal success probability of channel simulation with DIO. Our paper fills an important gap in the literature by establishing the probabilistic toolbox for the key resource of quantum coherence.

\section{The problem of probabilistic channel simulation}

To quantify the coherence of a quantum channel, the (ROC) of a quantum channel was proposed 
\begin{definition}{\rm ~\cite{diaz2018using}}\label{def:robustness_of_channel}
    The robustness of coherence of a quantum channel $\cN$, $C_R(\cN)$ is defined by 
    \begin{equation}
        1 + C_R(\cN) := \min\{\lambda:~\cN\le\lambda \cM, \cM \in {\rm MIO}\}.
    \end{equation}
    The inequality of $\cN\le\lambda\cM$ means that the map $\lambda\cM-\cN$ is completely positive.
\end{definition}

\begin{definition}{\rm ~\cite{diaz2018using}}\label{def:robustness_of_channel_smooth}
    The smoothed version of the robustness of a channel is called the $\varepsilon$ robustness of the coherence of channel, which is defined by 
    \begin{equation}
        C_R^\varepsilon(\cN) := \min\Big\{C_R(\cL):~\frac{1}{2}\|\cN-\cL\|_\diamond\le\varepsilon\Big\}.
    \end{equation}
\end{definition}

If we consider replacement channel $\cN_\sigma$, the output of which is independent of the input state $\rho$, i.e., $\cN_\sigma(\rho) = \tr[\rho]\sigma$, then the \update{ROC} of a channel (Definition~\ref{def:robustness_of_channel}) reduces to the \update{ROC} of states~\cite{napoli2016robustness, piani2016robustness}. \update{More details of ROC of states are shown in Appendix ~\ref{appen:Robustness of a state}.}
\update{
\begin{definition}{\rm ~\cite{napoli2016robustness, piani2016robustness}}
    Given an arbitrary quantum state $\rho$, the robustness of coherence of the state is defined as
    \begin{equation}
        C_R(\rho)=\min_{\tau}\Big\{s\ge 0 \Big| \frac{\rho+s\tau}{1+s}=\sigma \in \cI\Big\},
    \end{equation}
    where $\cI$ refers to the set of free states.
\end{definition}
}
Similar to the \update{ROC of a state}, the \update{ROC of a channel} quantifies the minimal mixing required to destroy all the coherence in a quantum channel $\cN$~\cite{diaz2018using}.  \update{The ROC} of quantum channel $C_R(\cN)$, as well as \update{$\varepsilon$-ROC of a channel} can be effectively calculated by SDP~\cite{diaz2018using}, which are shown in Appendix~\ref{appen:SDP}.

The task of probabilistic channel simulation can be defined as follows. For a given target channel $\cN_{A\rightarrow B}$, \update{which transforms linear operators in the system $A$ to the system $B$,} resource state $\omega$ and error tolerance $\varepsilon$, \update{we aim to find a free operation $\cM$ such that $\cM(\omega\ox \cdot)$ probabilistically outputs a channel $\cL$ which is close to $\cN(\cdot)$ up to error $\varepsilon$}. A single-bit classical flag register $F$ is used to indicate if the simulation is successful. If the register $F$ is found to be $\ket{0}$, then it implies the output \update{channel} $\cL(\cdot)$ simulates our target $\cN(\cdot)$ up to an error $\varepsilon$. Otherwise, the simulation fails and we discard the "rubbish" output $\cK(\cdot)$, where $\cK$ can be arbitrary MIO. When the channel simulation fails, we can perform a replacement channel $\cN_{\frac{\mathbb{I}}{d}}$, which replaces input state into identity, on $\cK(\cdot)$ as post-operation, i.e., $\cN_{\frac{\mathbb{I}}{d}}\circ\cK(\cdot) = \tr[\cdot]\mathbb{I}/d$, where $\mathbb{I}$ is identity and $d$ is the dimension of the system. Note that such post-operation will not change the success probability. The state version of this trick is shown in Ref.~\cite{fang2018probabilistic}. Our goal is to maximize the success probability of channel simulation. Here, we can define the problem as follows

\begin{problem}~\label{prob:1}
    Given triplet $(\cN, \omega, \varepsilon)$, what is the maximal success probability $P_\cO(\cN, \omega, \varepsilon)$ to simulate channel $\cN$ up to error $\varepsilon$ with the given resource state $\omega$ and free operation class $\cO\in\{{\rm MIO, DIO}\}$? Mathematically, \update{it is an optimization problem}
    \begin{subequations}\label{eq:original_opt_prob}
    \begin{align}
        P_\cO(\cN, \omega, \varepsilon) &= \max\quad  p;\\
        {\rm s.t.}\quad & \cM_{RA\rightarrow FB}(\omega\ox\cdot)=p\proj{0}_F\ox\cL(\cdot)\nonumber\\
        &\qquad\qquad+ (1-p)\proj{1}_F\ox\tr[\cdot]\frac{\mathbb{I}}{d}; \label{eq:original_opt_prob_2b}\\
        & \frac{1}{2} ||\cL-\cN||_\diamond \le \varepsilon;\\
        &\cM\in \cO,
    \end{align}
    \end{subequations}
    where $\cL$ are the probabilistically implemented channels, which approximates the target channel, and $\mathbb{I}$ refers to identity.
\end{problem}
\reupdate{System $R$ is the resource system containing the resource state}, and system $F$ is the flag system, indicating if the channel simulation is successful. $||\cdot||_\diamond$ is known as the diamond norm~\cite{watrous2009semidefinite}, which has two operational meanings: First, it quantifies how well one physically discriminates between two quantum channels~\cite{sacchi2005optimal}. If we set $\varepsilon=0$, it implies the exact implementation of the target channel. Second, it quantifies the cost for simulating a general hermitian preserving map with physical implementations~\cite{zhao2023power, regula2021operational}.

\section{Probabilistic channel simulation with MIO}

In this section, we are going to study the channel simulation with MIO under three cases. First, if the resource state is the rank-$m$ maximally coherent state, we provide the analytical expression of the maximal success probability. Second, if the resource state $\omega$ is a pure coherent state, we derive analytical lower bounds on the maximal success probability. Third, for general coherent states, we provide an efficiently computable SDP to determine the maximal success probability.

\begin{theorem}\label{theo:MIO_simulatable}
    \update{Given an arbitrary target channel $\cN$ and a fixed maximally coherent state $\Psi_m$ of dimension $m\ge 2$, the maximal success probability of the exact channel simulation with MIO is}
    \begin{equation}\label{eq:theorem_1}
        P_{\rm MIO}(\cN, \Psi_m, \varepsilon=0) = \min\Big\{1, \frac{m-1}{C_R(\cN)}\Big\}.
    \end{equation}
\end{theorem}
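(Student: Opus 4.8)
The plan is to establish the equality in two directions: an upper bound on any achievable success probability, and an explicit protocol attaining it. For the upper bound, I would start from the structural constraint~\eqref{eq:original_opt_prob_2b}. Setting $\varepsilon=0$ forces $\cL=\cN$, so the MIO map $\cM_{RA\to FB}$ acting on $\Psi_m\ox\cdot$ must output $p\,\proj{0}_F\ox\cN(\cdot) + (1-p)\,\proj{1}_F\ox\tr[\cdot]\mathbb{I}/d$. Discarding the flag register, we obtain an MIO channel $\widetilde\cM(\cdot) := \cM(\Psi_m\ox\cdot)$ with $\widetilde\cM = p\,\cN + (1-p)\,\cN_{\mathbb{I}/d}$, where $\cN_{\mathbb{I}/d}$ is a replacement (hence incoherent, hence free) channel. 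Rearranging gives $p\,\cN \le \widetilde\cM$, i.e. $\widetilde\cM - p\,\cN$ is completely positive. The key point is that since $\Psi_m$ has robustness of coherence exactly $m-1$ (this is the state-robustness value recalled via Definition~\ref{def:robustness_of_channel} specialized to replacement channels, and I would cite the known value $C_R(\Psi_m)=m-1$), an MIO map fed the input $\Psi_m\ox\rho$ cannot produce "more coherence" than is present; concatenating with the fixed preparation of $\Psi_m$ yields a channel whose own robustness is at most $C_R(\Psi_m)=m-1$. Combining $C_R(\widetilde\cM)\le m-1$ with $p\,\cN\le\widetilde\cM$ and the definition $1+C_R(\cN)=\min\{\lambda:\cN\le\lambda\cM',\cM'\in\mathrm{MIO}\}$ should force $p\cdot(1+C_R(\cN)) \le 1 + C_R(\widetilde\cM) \le 1 + (m-1) = m$ — wait, one must be careful: I would instead argue directly that $\widetilde\cM/p$ witnesses $\cN \le (1/p)\widetilde\cM$ with $\widetilde\cM\in\mathrm{MIO}$ only if $p\le 1$; the correct bound comes from writing $\widetilde\cM = \cN + (1/p - 1)(p\cN) \le \ldots$; cleanly, since $\widetilde\cM$ is MIO and $p\cN \le \widetilde\cM$, the map $\cN \le (1/p)\widetilde\cM$ shows $1+C_R(\cN) \le 1/p \cdot (1 + C_R(\widetilde\cM))$ is not quite it either. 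The clean statement I will use: from $\widetilde\cM = p\cN+(1-p)\Lambda$ with $\Lambda$ free and $\widetilde\cM$ having coherence robustness bounded by that of the resource, one gets $p\le (m-1)/C_R(\cN)$ by the standard robustness "dilution" argument (analogous to Ref.~\cite{regula2022probabilistic}). Together with the trivial bound $p\le 1$ this gives $p\le\min\{1,(m-1)/C_R(\cN)\}$.

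For achievability I would reverse the construction. Take an optimal MIO map $\cM^\star$ attaining $1+C_R(\cN)=\min\{\lambda:\cN\le\lambda\cM,\cM\in\mathrm{MIO}\}$, so $(1+C_R(\cN))\cM^\star - \cN$ is completely positive and trace non-increasing in the right way; call the normalized leftover channel $\cK$, so that $(1+C_R(\cN))\cM^\star = \cN + C_R(\cN)\cK$ with $\cK$ a genuine channel. Set $p := \min\{1,(m-1)/C_R(\cN)\}$. The task is to build an MIO map on $RA\to FB$ that uses $\Psi_m$ to "route" input to $\cN$ with weight $p$ and to the replacement channel with weight $1-p$, landing the flag accordingly. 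The natural idea: measure a subspace of the $m$-dimensional coherent resource to extract, probabilistically, enough coherence; when $(m-1)/C_R(\cN)\ge 1$ the resource is already sufficient to do the deterministic MIO simulation of Refs.~\cite{dana2017resource,diaz2018using} with $p=1$; when $(m-1)/C_R(\cN)<1$, one performs an incoherent measurement that with probability $p$ collapses $\Psi_m$ onto an effective resource state whose robustness equals $C_R(\cN)$ exactly (feasible since $p(m-1)=C_R(\cN)$ and lower-rank maximally coherent / suitably weighted states realize every robustness value up to $m-1$), then applies the deterministic simulation, and with probability $1-p$ declares failure and applies $\cN_{\mathbb{I}/d}$. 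I would verify the composite map is MIO (the measurement is incoherent, the conditional operations are MIO by construction, and the flag is classical) and that it reproduces exactly the right-hand side of~\eqref{eq:original_opt_prob_2b} with $\cL=\cN$.

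The main obstacle I anticipate is the achievability side, specifically constructing the incoherent "coherence-extracting" measurement on $\Psi_m$ that downgrades it to an effective resource of robustness exactly $C_R(\cN)$ while keeping the overall operation inside MIO and getting the branch weights and flag correlations exactly right; the subtlety is that free operations acting on $\Psi_m$ cannot in general prepare an arbitrary target resource deterministically, so the probabilistic branching must be chosen so that the conditional states are reachable by MIO and the post-measurement simulation is the known deterministic one. The upper-bound direction is more routine — it is the channel analogue of the probabilistic state-transformation bound via robustness — but I would take care that the argument "MIO cannot increase robustness beyond that of the fed-in resource state" is stated precisely for channels, perhaps by passing through Choi states or by a direct monotonicity argument, since $C_R$ of a channel is defined through the MIO cone rather than through any single state.
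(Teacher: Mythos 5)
Your achievability argument is where the proposal genuinely breaks down, and it is also where the paper's route is entirely different. Your protocol in the regime $(m-1)/C_R(\cN)<1$ asks for an incoherent measurement that, with probability $p$, collapses $\Psi_m$ onto an ``effective resource'' of robustness exactly $C_R(\cN)$. The accounting you give is backwards: the relation you invoke, $p(m-1)=C_R(\cN)$, would force $p>1$ in this regime (the relevant relation is $p\,C_R(\cN)=m-1$), and the conditional state you need has robustness $C_R(\cN)>m-1$, which contradicts your own justification that states of robustness ``up to $m-1$'' are reachable. More importantly, even with the accounting fixed, the step ``probabilistically convert $\Psi_m$ into a more resourceful state at exactly the optimal rate, then simulate deterministically'' is a claim of the same strength as the theorem itself and is not supplied by anything you cite: the known MIO distillation bound (the one used in the paper's pure-state corollary) only gives, e.g., probability $1/4$ for $\Psi_2\rightarrow\Psi_4$, short of the $1/3$ your protocol would need when $C_R(\cN)=3$; and the sufficiency ``robustness $\ge C_R(\cN)$ implies deterministic MIO simulability'' is only established in Refs.~\cite{dana2017resource,diaz2018using} for maximally coherent resource states, not for a generic post-measurement state. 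The paper sidesteps all of this with a direct construction: it folds the flag into a single target $\cL(\cdot)=p\proj{0}\ox\cN(\cdot)+(1-p)\proj{1}\ox\mathbb{I}/d$, uses convexity (the failure branch is free) to get $C_R(\cL)\le p\,C_R(\cN)\le m-1$, and then writes down the explicit MIO $\cM(\omega\ox\cdot)=\tr[\Psi_m\omega]\,\cL(\cdot)+\tr[(\mathbb{I}-\Psi_m)\omega]\,\cL'(\cdot)$, where $\cL'$ is the completing channel guaranteed by the deterministic-simulation criterion of Ref.~\cite{diaz2018using}.

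On the converse, your instinct is right but the decisive inequality is asserted, not derived. Note first that $\widetilde{\cM}(\cdot)=\cM(\Psi_m\ox\cdot)$ is generally \emph{not} MIO (it reproduces the coherent $\cN$ with weight $p$); what is true is $C_R(\widetilde{\cM})\le m-1$, which itself needs the short Choi/monotonicity argument you defer. From $p\,\cN\le\widetilde{\cM}$ and $C_R(\widetilde{\cM})\le m-1$ alone one only gets $1+C_R(\cN)\le(1+C_R(\widetilde{\cM}))/p$, i.e. $p\le m/(1+C_R(\cN))$, which is strictly weaker than the claimed bound exactly in the probabilistic regime $C_R(\cN)>m-1$ --- your aborted inline attempts circle this, and the final appeal to a ``standard dilution argument'' leaves the gap open. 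The missing step is to subtract the free failure branch before renormalizing: writing $\widetilde{\cM}\le\lambda\cM''$ with $\cM''\in{\rm MIO}$ and $\lambda=1+C_R(\widetilde{\cM})$, the map $\bigl(\lambda\cM''-(1-p)\,\tr[\cdot]\mathbb{I}/d\bigr)/(\lambda-1+p)$ is again MIO and dominates $p\,\cN/(\lambda-1+p)$, which yields $p\,C_R(\cN)\le C_R(\widetilde{\cM})\le m-1$. With that inequality made explicit (the paper itself is terse on optimality), the converse is fine; without it, and with the achievability protocol as described, the proposal does not prove the theorem.
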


\begin{proof}
    If a target channel $\cN$ can be simulated by MIO and the given resource state $\Psi_m$, it means that there exists an operation $\cM\in {\rm MIO}$, such that
    \begin{equation}
        \cM(\Psi_m\ox\cdot) = \cL(\cdot) = p\proj{0}\ox\cN(\cdot) + (1-p)\proj{1}\ox\cK(\cdot),\nonumber
    \end{equation}
    with $p>0$. The output $\cK(\cdot)$ is a failure branch, which should be discarded. And the channel $\cK$ can be arbitrary MIO, i.e., $C_R(\cK)=0$. 
    The main idea is to prove $C_R(\cL) = pC_R(\cN)$, which will be obtained directly by proving $C_R(\cL) \le pC_R(\cN)$ and $C_R(\cL) \ge pC_R(\cN)$. 
    
    \reupdate{Due to the convexity of the ROC of channels~\cite{diaz2018using}, we have
    \begin{align}
        C_R(\cL) &= C_R(p\ketbra{0}{0}\ox\cN + (1-p)\ketbra{1}{1}\ox\cK)\nonumber\\
        &\le pC_R(\ketbra{0}{0}\ox\cN) + (1-p)C_R(\ketbra{1}{1}\ox\cK)\nonumber\\
        &= pC_R(\cN) + (1-p)C_R(\cK)\nonumber\\
        &= pC_R(\cN)\label{eq:theo1_convexity}.
    \end{align}
    The third equation holds because removing and appending systems in incoherent states are in MIO.
    }
    
    \reupdate{On the other hand, from the definition of the ROC of a channel, we have 
    \begin{align}
        &1 + C_R(\cL) \nonumber\\
        =& \min\{\lambda|p \cN_{[0]}+ (1-p)\cK_{[1]}\le \lambda \cM, \cM\in\text{MIO}\}\nonumber\\
        =& \min\left\{\lambda\Big|\cN_{[0]}\le \frac{\lambda \cM - (1-p)\cK_{[1]}}{p}, \cM\in\text{MIO}\right\}\nonumber\\
        =& \min\Big\{\lambda\Big|\cN_{[0]}\le \frac{\lambda+p-1}{p}\cdot \frac{\lambda \cM - (1-p)\cK_{[1]}}{\lambda+p-1},\cM\in\text{MIO}\Big\}\label{eq:theo1:CP-ness}\\
        =& \lambda^*,\label{eq:theo1_optimal_ROC_L}
    \end{align}}
    \reupdate{where we define $\cN_{[0]} = \ketbra{0}{0}\ox\cN$ and $\cM_{[1]} = \ketbra{1}{1}\ox\cM$.  Here, the minimum $\lambda^*$ always exists, which can be straightforwardly derived from Theorem 2 in Ref.~\cite{diaz2018using}.
    For convenience, we define $\cM':=\frac{\lambda \cM - (1-p)\cK_{[1]}}{\lambda+p-1}$ for $\cM \in MIO$.
    Because the minimum is attained by $\lambda \ge 1$, the inequality in Eq.~\eqref{eq:theo1:CP-ness} ensures that $\cM'$ is completely positive, i.e., \reupdate{$0\le [p/(\lambda +p-1) ]\cN_{[0]}\le\cM'$}. Since both $\cM,\cK_{[1]}\in \mathrm{MIO}$, we have $\tr[\cM(\rho)]=\tr\rho=\tr[\cK_{[1]}(\rho)]$ and $\cM(\rho), \cK_{[1]}(\rho) \in \cI$ for any free state $\rho\in\cI$. 
    Therefore, we have }
    $$\cM'(\rho) = \frac{\lambda \cM(\rho)-(1-p)\cK_{[1]}(\rho)}{\lambda +p-1} \in \cI \quad (\forall \rho\in\cI),$$
    and $\tr[\cM'(\rho)]=\tr\rho$.
    Hence, we obtain $\cM'\in \mathrm{MIO}$. 
    Then, \reupdate{noting that $C_R(\cN) = C_R(\cN_{[0]})$}, we have 
    \reupdate{
    \begin{align}
        1 + C_R(\cN) &= 1+C_R(\cN_{[0]})\nonumber\\
        &=\min\{\tau| \cN_{[0]} \le \tau \cM, \cM\in\text{MIO}\},\nonumber\\
        & \le \frac{\lambda^* + p - 1}{p}. \nonumber
    \end{align}
    }
    Because $\lambda^*=1+C_R(\cL)$ from Eq.~\eqref{eq:theo1_optimal_ROC_L}, we arrive at 
    \begin{equation}\label{eq:ROC N le L}
        pC_R(\cN) \le C_R(\cL).
    \end{equation}
    Combined with Eq.~\eqref{eq:theo1_convexity}, we obtain
    \begin{equation}\label{eq:theo1_equality}
        pC_R(\cN) = C_R(\cL).
    \end{equation}
    From Theorem 4 in Ref.~\cite{diaz2018using}, there exists an MIO to simulate channel $\cL$ if and only if $m-1\ge C_R(\cL)$. By considering Eq.~\eqref{eq:theo1_equality}, we can rephrase that there exists an MIO to simulate channel \reupdate{$\cN$} probabilistically if and only if the success probability $p$ satisfies
    \begin{equation}
        \frac{m-1}{C_R(\cN)} \ge p.
    \end{equation}
    Because the probability cannot exceed 1, the maximal success probability is 
    \begin{equation}
        P_{\text{MIO}}(\cN, \Psi_m, \varepsilon=0) = \min \Big\{1, \frac{m-1}{C_R(\cN)}\Big\}.
    \end{equation}
    The proof is complete.
\end{proof}

Theorem~\ref{theo:MIO_simulatable} implies that the success probability of simulating a channel by using MIO and maximally coherent states is always greater than zero, i.e., $P_{\text{MIO}}(\cN, \Psi_m, \varepsilon=0)>0$ with $m\ge 2$. In other words, any quantum channel can be simulated by MIO probabilistically with $\Psi_m$.

According to Theorem~\ref{theo:MIO_simulatable}, if given enough resource, i.e., $C_R(\Psi_m)=m-1 \ge C_R(\cN)$, it implies the success probability equals to 1, thus we can simulate the channel deterministically. Otherwise, if we can only access a limited amount of coherence, i.e., $C_R(\Psi_m)=m-1 < C_R(\cN)$, then the channel fails to be simulated deterministically.
However, we can still succeed in such a simulation probabilistically, with the probability given by the ratio between the robustness of resource state $\Psi_m$ and the robustness of coherent channel $\cN$, i.e.,
\begin{equation}\label{eq:upper_bound_succ_prob}
    p=\frac{m-1}{C_R(\cN)}.
\end{equation}
In this sense, Theorem~\ref{theo:MIO_simulatable} quantitatively shows in terms of the resource measures how the advantage of probabilistic channel simulation over the deterministic one appears. Examples of coherent unitary channels are studied to demonstrate such an advantage in the end of this section.

If we are allowed to simulate target channel $\cN$ up to error $\varepsilon$, it is equivalent to exactly simulate a channel $\cL$ which satisfies $\frac{1}{2}\|\cL-\cN\|_\diamond\le \varepsilon$. From the definition of $\varepsilon$ ROC of a channel, we have $\min_{\cL}C_R(\cL)=C_R^\varepsilon(\cN)$, which leads to the following corollary directly.

\begin{corollary}\label{coro:MIO_approximate_simulation}
    Given an arbitrary target channel $\cN$ and a fixed maximally coherent state $\Psi_m$ of dimension $m\ge 2$, the maximal success probability of channel simulation with MIO up to error $\varepsilon$ is
    \begin{align}
        P_{\rm MIO}(\cN, \Psi_m, \varepsilon) &= \max_{\cL} P_{\rm MIO}(\cL, \Psi_m, 0) \nonumber\\
        &= \min\Big\{1, \frac{m-1}{\min_\cL C_R(\cL)}\Big\}\nonumber\\
        &= \min\Big\{1, \frac{m-1}{C_R^\varepsilon(\cN)}\Big\},
    \end{align}
    \update{where $\cL$ are $\varepsilon$ close to the target channels $\cN$.}
\end{corollary}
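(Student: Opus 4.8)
The plan is to reduce the approximate-simulation problem to the exact-simulation result of Theorem~\ref{theo:MIO_simulatable} by treating the simulated channel $\cL$ as an extra optimization variable. In the optimization defining $P_{\rm MIO}(\cN,\Psi_m,\varepsilon)$, the feasible set ranges over triples $(p,\cM,\cL)$, and $\cL$ enters only through the decoupled constraint $\frac{1}{2}\|\cL-\cN\|_\diamond\le\varepsilon$; the constraints $\cM\in{\rm MIO}$ and $\cM(\Psi_m\ox\cdot)=p\proj{0}\ox\cL(\cdot)+(1-p)\proj{1}\ox\tr[\cdot]\mathbb{I}/d$ couple $(p,\cM)$ to a now-fixed $\cL$. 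So I would split the maximization as an outer maximum over channels $\cL$ with $\frac{1}{2}\|\cL-\cN\|_\diamond\le\varepsilon$ of an inner maximum over $(p,\cM)$, and observe that the inner problem is exactly the $\varepsilon=0$ instance of the original optimization with target channel $\cL$ (setting $\varepsilon=0$ there forces the realized channel to equal its target $\cL$). This gives the first equality $P_{\rm MIO}(\cN,\Psi_m,\varepsilon)=\max_{\cL}P_{\rm MIO}(\cL,\Psi_m,0)$.

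Next I would plug in Theorem~\ref{theo:MIO_simulatable}, applied to target $\cL$ and resource $\Psi_m$ with $m\ge2$, which gives $P_{\rm MIO}(\cL,\Psi_m,0)=\min\{1,(m-1)/C_R(\cL)\}$. Since the map $x\mapsto\min\{1,(m-1)/x\}$ is non-increasing on $(0,\infty)$ — with the natural convention that it equals $1$ at $x=0$ — maximizing it over the $\varepsilon$-ball is the same as evaluating it at $\min_{\cL}C_R(\cL)$ taken over that ball, i.e.\ $\max_{\cL}\min\{1,(m-1)/C_R(\cL)\}=\min\{1,(m-1)/\min_{\cL}C_R(\cL)\}$. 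Finally, Definition~\ref{def:robustness_of_channel_smooth} identifies $\min\{C_R(\cL):\frac{1}{2}\|\cL-\cN\|_\diamond\le\varepsilon\}$ with $C_R^\varepsilon(\cN)$, which yields the last equality and finishes the proof.

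The argument is mostly bookkeeping; the two things worth stating explicitly are (i) that the constraint on $\cL$ is genuinely independent of the pair $(p,\cM)$, so the maximization really does factorize in the claimed way, and (ii) attainment of the outer optimum, which follows from compactness of the diamond-norm ball $\{\cL:\frac{1}{2}\|\cL-\cN\|_\diamond\le\varepsilon\}$ in finite dimension together with the minimum in Definition~\ref{def:robustness_of_channel_smooth} being attained, as is implicit in its SDP formulation. The degenerate case $C_R^\varepsilon(\cN)=0$ is covered by the convention $\min\{1,(m-1)/0\}=1$, consistent with the fact that a channel with vanishing robustness is already simulable deterministically by Theorem~\ref{theo:MIO_simulatable}.
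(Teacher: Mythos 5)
Your proposal is correct and follows essentially the same route as the paper: the paper likewise reduces $\varepsilon$-approximate simulation of $\cN$ to exact simulation of a channel $\cL$ in the diamond-norm ball, applies Theorem~\ref{theo:MIO_simulatable} to each such $\cL$, and uses $\min_{\cL}C_R(\cL)=C_R^\varepsilon(\cN)$ from Definition~\ref{def:robustness_of_channel_smooth}. Your added remarks on the monotonicity of $x\mapsto\min\{1,(m-1)/x\}$ and on attainment of the outer optimum merely make explicit what the paper leaves implicit.
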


From Corollary~\ref{coro:MIO_approximate_simulation}, we can approximately simulate the target channel. The more error we allow, the higher the success probability that we can simulate it.

For a further step, instead of using a maximally coherent state, we consider a general pure coherent state. We obtain the following corollary, which gives a lower bound of the maximal success probability for the general pure resource state.

\begin{corollary}
    Given target channel $\cN$ and coherent pure state $\psi=\proj{\psi}$, where $\ket{\psi}=\sum_{i=1}^n\psi_i\ket{i}$, $\psi_i\neq 0, n\ge 2$, the maximal success probability of channel simulation with MIO up to error $\varepsilon$ is lower bounded by 
    \begin{align}\label{eq:coro_arbitrary_state}
        P_{\rm MIO}(\cN, \psi, \varepsilon) &\ge P_{\rm MIO}(\cN, \Psi_m, \update{\varepsilon}) \times P^{\rm distill}_{\rm MIO}(\proj{\psi}\rightarrow\Psi_m),\\
        &\update{\ge \min\Big\{1, \frac{m-1}{C_R^\varepsilon(\cN)} \times \frac{n^2}{m(\sum_{i=1}^n|\psi_i|^{-2})}\Big\}},
    \end{align}   
    where $P^{\rm distill}_{\rm MIO}(\proj{\psi}\rightarrow\Psi_m)$ is the success probability of coherence distillation with MIO from input state $\proj{\psi}$ to the rank-$m$ maximally coherent state \update{for arbitrary integer $m \ge 2$.}
\end{corollary}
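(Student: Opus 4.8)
The plan is to prove the bound by \emph{composing} two probabilistic MIO subroutines acting on disjoint systems: a coherence‑distillation stage that turns the supplied pure resource $\psi$ into the rank‑$m$ maximally coherent state $\Psi_m$, followed immediately by the approximate channel‑simulation stage of Corollary~\ref{coro:MIO_approximate_simulation} run with resource $\Psi_m$. Concretely, I would take an MIO map $\cD_{R\to F_1R'}$ with $\cD(\psi)=p_{\rm d}\proj{0}_{F_1}\ox\Psi_m+(1-p_{\rm d})\proj{1}_{F_1}\ox\sigma_{\rm garb}$ (with $\sigma_{\rm garb}$ incoherent), and, conditioned on the outcome $F_1=\ket0$, feed $R'$ together with the input system $A$ into the MIO simulation map $\cM_{\rm sim}$ supplied by Corollary~\ref{coro:MIO_approximate_simulation}; whenever either flag reports failure the composite outputs $\tr[\cdot]\,\mathbb I/d$ and flags failure, exactly matching the form demanded in Eq.~\eqref{eq:original_opt_prob_2b}. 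Since MIO is closed under composition and under classically‑controlled combination, and $\rho\mapsto\tr[\rho]\mathbb I/d$ is itself MIO, the composite $\cM\in{\rm MIO}$; because the two stages touch disjoint systems and the garbage branch is simply discarded, the $\varepsilon$‑approximation constraint inherited from $\cM_{\rm sim}$ is untouched and the success probability factorizes as $p_{\rm d}\cdot P_{\rm MIO}(\cN,\Psi_m,\varepsilon)$. Optimizing $\cD$ over MIO distillation maps yields the first inequality, and substituting $P_{\rm MIO}(\cN,\Psi_m,\varepsilon)=\min\{1,(m-1)/C_R^\varepsilon(\cN)\}$ from Corollary~\ref{coro:MIO_approximate_simulation} reduces the second inequality to a lower bound on $P^{\rm distill}_{\rm MIO}(\proj\psi\to\Psi_m)$.

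For that lower bound I would exhibit one explicit feasible $\cD$. Choose a surjection $g\colon\{1,\dots,n\}\to\{0,\dots,m-1\}$ with all fibers of equal size $n/m$ (padding $\ket\psi$ with basis vectors of vanishing weight if $m\nmid n$), and define the success Kraus operator $K_0=\sqrt c\,\sum_{i=1}^{n}\psi_i^{-1}\ketbra{g(i)}{i}$. Then $K_0\ket\psi=\sqrt c\,(n/\sqrt m)\,\ket{\Psi_m}$, while $K_0^\dagger K_0=c\sum_{j=0}^{m-1}\proj{w_j}$ with $\ket{w_j}=\sum_{i\in g^{-1}(j)}\psi_i^{-1}\ket i$, so $\|K_0^\dagger K_0\|_\infty=c\,\max_j\sum_{i\in g^{-1}(j)}|\psi_i|^{-2}\le c\sum_{i=1}^{n}|\psi_i|^{-2}$; taking $c=\big(\sum_i|\psi_i|^{-2}\big)^{-1}$ makes $K_0^\dagger K_0\le\mathbb I$, and completing with the garbage branch $\rho\mapsto\tr[(\mathbb I-K_0^\dagger K_0)\rho]\,\proj0$ gives a CPTP map $\cD$. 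One then checks $\cD$ is MIO — on each $\proj i$ it returns $(c/|\psi_i|^2)\proj{g(i)}\ox\proj0_{F_1}+(1-c/|\psi_i|^2)\proj0\ox\proj1_{F_1}$, which is diagonal — and reads off $p_{\rm d}=\|K_0\ket\psi\|^2=cn^2/m\ge n^2/\big(m\sum_i|\psi_i|^{-2}\big)$, the claimed distillation bound.

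The routine parts are the closure/bookkeeping arguments for the composition and the elementary algebra verifying $K_0^\dagger K_0\le\mathbb I$ and evaluating $p_{\rm d}$. \textbf{The main obstacle} is ensuring that $\cD$ is a genuine MIO channel and not merely CPTP: one must check that \emph{both} the success branch $\proj i\mapsto(c/|\psi_i|^2)\proj{g(i)}$ and the chosen garbage completion map every incoherent state to an incoherent state, which is precisely what forces the form of $K_0$ — a diagonal ``filter'' $\psi_i^{-1}$ followed by the incoherent relabelling $g$. A secondary nuisance is the equal‑fiber‑size requirement on $g$: handling $m\nmid n$ cleanly — by a padding/limiting argument, or by targeting a slightly sub‑normalized $\Psi_m$ — is the only delicate point of the construction, and it is also why the estimate is stated with the loose bound $\max_j\sum_{i\in g^{-1}(j)}|\psi_i|^{-2}\le\sum_i|\psi_i|^{-2}$ rather than with an optimally balanced partition, which would replace the factor $m\sum_i|\psi_i|^{-2}$ by the smaller $\sum_i|\psi_i|^{-2}$ whenever the weights $|\psi_i|^{-2}$ admit a balanced $m$‑partition.
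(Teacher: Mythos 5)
Your first stage is exactly the paper's argument: the paper also proves the corollary by running an MIO distillation protocol $\psi\to\Psi_m$ and, on success, the MIO simulation protocol of Corollary~\ref{coro:MIO_approximate_simulation}, so that the probabilities multiply; for the numerical bound it then simply \emph{cites} Ref.~\cite{fang2018probabilistic} for $P^{\rm distill}_{\rm MIO}(\proj{\psi}\to\Psi_m)\ge n^2/\big(m\sum_i|\psi_i|^{-2}\big)$. Your bookkeeping for the composition (MIO closed under composition, under incoherent-flag-controlled branching, and the replacement map $\tr[\cdot]\mathbb{I}/d$ being MIO) is sound, and your factorization through $P_{\rm MIO}(\cN,\Psi_m,\varepsilon)$ in fact gives a slightly cleaner chain than the statement's $P_{\rm MIO}(\cN,\Psi_m,0)$.

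The part where you go beyond the paper — an explicit Kraus construction $K_0=\sqrt{c}\sum_i\psi_i^{-1}\ketbra{g(i)}{i}$ for the distillation bound — is correct as long as $m\mid n$: the MIO check, $K_0^\dagger K_0\le\mathbb{I}$ with $c=(\sum_i|\psi_i|^{-2})^{-1}$, and $p_{\rm d}=cn^2/m$ all go through. But your treatment of $m\nmid n$ is a genuine gap, not a nuisance. Padding $\ket\psi$ with vanishing-weight basis vectors does not work with your filter: if the padded amplitudes are exactly zero, $\psi_i^{-1}$ is undefined, and any finite coefficient you place on those indices contributes nothing to $K_0\ket\psi$, so the fiber sums $\sum_{i\in g^{-1}(j)}\beta_i\psi_i$ are no longer equal and the success output is not $\Psi_m$; if instead the padded amplitudes are small but nonzero, then $\sum_i|\psi_i|^{-2}\to\infty$, so $c\to0$ and the bound degenerates (and you are then distilling from a perturbed state anyway). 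To repair it you must change the coefficients, not the state: take a near-balanced partition $\{S_j\}_{j=0}^{m-1}$ of $\{1,\dots,n\}$ with $|S_j|=n_j$ and set $\beta_i=\psi_i^{-1}/n_j$ for $i\in S_j$, which restores $K_0\ket\psi\propto\ket{\Psi_m}$ and yields success probability $m\min_j n_j^2/\sum_{i\in S_j}|\psi_i|^{-2}$; one then still has to argue (e.g., by assigning the smallest values of $|\psi_i|^{-2}$ to the smaller fibers and an averaging estimate) that a suitable partition makes this at least $n^2/\big(m\sum_i|\psi_i|^{-2}\big)$. That extra combinatorial step is missing from your write-up; alternatively, do what the paper does and invoke the distillation bound of Ref.~\cite{fang2018probabilistic} directly, which closes the argument without any divisibility caveat.
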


\begin{proof}
    Before starting the proof, we need to note that arbitrary coherent pure state $\ket{\psi}$ can be distilled into maximal coherent state $\Psi_m$ by MIO with non-zero probability~\cite{fang2018probabilistic}, which is
    \begin{equation}\label{eq:MIO_distillation_nonzero}
        P^{\rm distill}_{\rm MIO}(\psi\rightarrow\Psi_m)\ge\frac{n^2}{m(\sum_{i=1}^n|\psi_i|^{-2})}>0,
    \end{equation}
    where $m\ge 2$ is an integer.
    
    In order to simulate channel $\cN$ using a given coherent pure state $\psi$, one feasible method is to probabilistically distill the maximally coherent state $\Psi_m$ from $\psi$ first, then probabilistically implement the target channel using the distilled maximally coherent state. Therefore, the maximal success probability is lower bounded as
    \begin{align}
        P_{\rm MIO}(\cN, \psi, \varepsilon) \ge P_{\rm MIO}(\cN, \Psi_m, \update{\varepsilon}) \times P^{\rm distill}_{\rm MIO}(\proj{\psi}\rightarrow\Psi_m).
    \end{align}
    By considering Eq.~\eqref{eq:MIO_distillation_nonzero} and Corollary~\ref{coro:MIO_approximate_simulation}, we arrive at the lower bound of the maximal success probability as
    \begin{align}
        P_{\rm MIO}(\cN, \psi, \varepsilon) \ge \min\Big\{1, \frac{m-1}{C_R^\varepsilon(\cN)}\Big\} \times \frac{n^2}{m(\sum_{i=1}^n|\psi_i|^{-2})}.
    \end{align}
    The proof is completed.
\end{proof}

This corollary implies that if we take an arbitrary coherent pure state as a resource state, we can simulate an arbitrary channel by MIO with non-zero probability. In other words, any coherent pure state is "useful" in the probabilistic channel simulation with MIO. Even only having a few resources, coherent pure states $\psi$ possess the potential to simulate the target channel, albeit with a small probability of success
$P_{\rm MIO}(\cN, \psi, \varepsilon) > 0$.

So far, we have studied the probabilistic channel simulation with MIO by considering the resource states as maximally coherent states and coherent pure states, respectively. For a more general case, where the given resource state is a coherent mixed state, we provide an efficiently computable SDP to achieve the maximal success probability of channel simulation. Due to the non-linearity of Eq.~\eqref{eq:original_opt_prob_2b}, we cannot formulate Problem~\ref{prob:1} into an SDP directly. 
We thus consider a generalization of the set of free operations $O$, namely the set of free sub normalized quantum operations $O_{sub}$ which consist of completely positive and trace non-increasing maps that are free in the sense of $O$.
By adopting this generalization, we convert the probability optimization into the following expression. A similar technique is also applied in Ref.~\cite{buscemi2017quantum, ishizaka2005multiparticle}.

\begin{lemma}\label{lemma:simplify}
    For any triplet $(\cN, \omega,\varepsilon)$ and operation class $\cO$, the maximal success probability of coherence channel simulation $P_\cO(\cN, \omega,\varepsilon) = \max \{p\in\mathbb{R}_+| \cE(\omega\ox\cdot)=p\cL(\cdot), \frac{1}{2} ||\cL-\cN||_\diamond \le \varepsilon, \cE\in \cO_{\rm sub}\}$, \update{where $\cL$ are the implemented channels}.

\end{lemma}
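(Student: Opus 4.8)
The plan is to prove the equality by two inequalities, reading the right‑hand side as the relaxation of Problem~1 obtained by \emph{forgetting} the flag register $F$ and the explicit failure branch: the single completely positive, trace‑non‑increasing map $\cE$ plays the role of the ``success block'' of the flagged operation $\cM$, and $\cO_{\rm sub}$ is the subnormalized version of $\cO$, i.e.\ CP trace‑non‑increasing maps that still obey the structural constraint of $\cO$ (for ${\rm MIO}$: $\cE(\rho)$ is diagonal and subnormalized whenever $\rho\in\cI$; for ${\rm DIO}$: $\Delta\circ\cE=\cE\circ\Delta$). This is the same passage between an instrument and one of its Kraus blocks used in Refs.~\cite{buscemi2017quantum, ishizaka2005multiparticle}.

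For $P_\cO(\cN,\omega,\varepsilon)\le(\text{r.h.s.})$ I would take any $\cM\in\cO$ feasible in Eq.~\eqref{eq:original_opt_prob} with value $p$ and set $\cE(\cdot):=(\bra{0}_F\ox\mathbb{I}_B)\,\cM(\cdot)\,(\ket{0}_F\ox\mathbb{I}_B)$. This map is CP and trace‑non‑increasing, since $\tr[\cE(\sigma)]=\tr[\proj{0}_F\,\cM(\sigma)]\le\tr[\cM(\sigma)]=\tr[\sigma]$, and it stays in $\cO_{\rm sub}$ because compressing to the diagonal flag value $\ket{0}_F$ preserves diagonality (MIO case) and commutes with dephasing (DIO case). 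Plugging $\omega\ox\cdot$ into Eq.~\eqref{eq:original_opt_prob_2b} yields $\cE(\omega\ox\cdot)=p\cL(\cdot)$ with $\tfrac12\|\cL-\cN\|_\diamond\le\varepsilon$, so $(\cE,\cL,p)$ is feasible for the right‑hand side.

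For the converse $P_\cO(\cN,\omega,\varepsilon)\ge(\text{r.h.s.})$ I would start from a feasible triple $(\cE,\cL,p)$ and manufacture a failure branch: put $\cE_1(\cdot):=\tr\!\big[(\mathbb{I}_{RA}-\cE^{*}(\mathbb{I}_B))\,(\cdot)\big]\,\mathbb{I}_B/d$, which is CP because trace‑non‑increasingness of $\cE$ gives $0\le\cE^{*}(\mathbb{I}_B)\le\mathbb{I}_{RA}$, and which makes $\cE+\cE_1$ trace‑preserving. Then $\cM(\cdot):=\proj{0}_F\ox\cE(\cdot)+\proj{1}_F\ox\cE_1(\cdot)$ is a channel, it lies in $\cO$ (for MIO both $\cE$ and the measure‑and‑prepare $\cE_1$ send $\cI\to\cI$; for DIO, dephasing‑covariance of $\cE$ forces $\cE^{*}(\mathbb{I}_B)$ to be diagonal, which makes $\cE_1$ dephasing‑covariant, and a direct sum over the already‑diagonal flag is again dephasing‑covariant on $FB$), and since $\tr[\mathbb{I}_B\,\cE(\omega\ox\rho)]=p\,\tr[\rho]$ one gets $\cE_1(\omega\ox\rho)=(1-p)\tr[\rho]\,\mathbb{I}_B/d$, with $p\le1$ automatic from trace‑non‑increasingness of $\cE$ on $\omega\ox\rho$. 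Hence $\cM$ reproduces the right‑hand side of Eq.~\eqref{eq:original_opt_prob_2b} exactly, and $(\cM,\cL,p)$ is feasible for Problem~1; combining the two bounds gives the claim.

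I expect the delicate point to be precisely this completion step: one must produce a failure branch $\cE_1$ that simultaneously (i) normalizes $\cE+\cE_1$ to a channel, (ii) keeps the flagged map inside $\cO$, and (iii) degenerates to the prescribed replacement channel $\tr[\cdot]\,\mathbb{I}/d$ once the resource $\omega$ is fed in. The measure‑and‑prepare choice above does all three, but the MIO and DIO verifications of (ii) differ slightly---the diagonality of $\cE^{*}(\mathbb{I}_B)$ is needed (and automatic) for DIO but irrelevant for MIO---so I would treat the two classes separately there. Everything else, including the diamond‑norm constraint on $\cL$, is carried through the construction untouched.
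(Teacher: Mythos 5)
Your proposal is correct and follows essentially the same route as the paper's proof: decompose the flagged free channel into success and failure subnormalized branches, and complete a feasible $\cE\in\cO_{\rm sub}$ with the replacement-type failure branch $\cE_1(\cdot)=\tr[(\mathbb{I}-\cE^{*}(\mathbb{I}_B))(\cdot)]\,\mathbb{I}_B/d$, which is exactly the paper's choice $\cE_1=(1-\tr[\cE_0(\cdot)])\mathbb{I}/d$. The only difference is that you spell out the membership verifications (the flag compression staying in $\cO_{\rm sub}$, and the diagonality of $\cE^{*}(\mathbb{I}_B)$ needed for the DIO case) that the paper dismisses with ``we can check''.
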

\begin{proof}
    For any quantum operation $\Lambda(\omega\ox\rho)=\proj{0}\ox\cE_0(\omega\ox\rho) + \proj{1}\ox\cE_1(\omega\ox \rho)$, where $\cE_0$ and $\cE_1$ are two subnormalized operations, we can check that $\Lambda\in\cO$ if and only if $\cE_0, \cE_1\in\cO_{\rm sub}$, and $\cE_0 + \cE_1$ is trace preserving. Thus, the optimization in Eq.~\eqref{eq:original_opt_prob} is equivalent to finding the optimal subnormalized operations $\cE_0$ and $\cE_1$ such that $\cE_0(\omega\ox\rho)=p\cL(\rho), \cE_1(\omega\ox\rho) = (1-p)\mathbb{I}/d, \frac{1}{2}||\cL-\cN||_\diamond\le\varepsilon$, and $\cE_0+\cE_1$ is trace preserving. Since we can take $\cE_1(\omega\ox\rho)=(1-\tr[\cE_0(\omega\ox\rho)])\frac{\mathbb{I}}{d}$, without compromising the success probability, the maximal success probability of coherence channel simulation is only dependent on $\cE_0$, which complete the proof.
\end{proof}

By generalizing the free operation into the subnormalized version, Lemma~\ref{lemma:simplify} simplifies the optimization of the maximal success probability. For further steps, we formulate the success probability optimization as an efficiently computable SDP, which is shown as follows.

\begin{proposition}\label{prop:suc_p_MIO}
    For a given triplet \update{$(\cN_{A\rightarrow B}, \omega, \varepsilon)$}, the maximal success probability to simulate the target channel with MIO is given by $P_{\rm MIO}(\update{\cN_{A\rightarrow B}}, \omega, \varepsilon) = 1/t_{\rm min}$, where $t_{\rm min}$ is given by 
    \begin{subequations}\label{eq:MIO_SDP}
    \begin{align}
        t_{\rm min} = &\min ~ t; \nonumber\\
        {\rm s.t.}& \tr_R[J_{\Tilde{\cE}}(\omega^T\ox\mathbb{I}_A\ox\mathbb{I}_B)] = J_{\cL_{AB}}\label{eq:MIO_SDP_a};\\
        &J_{\Tilde{\cE}} \ge 0, \tr_{B}[J_{\Tilde{\cE}}] \le t \mathbb{I}_{RA}\label{eq:MIO_SDP_b};\\
        &\reupdate{\tr [J_{\Tilde{\cE}}(\ketbra{i,j,k}{i,j,k'}_{RAB})]=0, ~\forall i,j,k\neq k'};\label{eq:MIO_SDP_c}\\
        &\tr_B[J_\cL] = \mathbb{I}_A ;\label{eq:MIO_SDP_d}\\
        &Z\ge 0, Z\ge J_\cL - J_\cN, 
        \tr_B[Z]\le \varepsilon \mathbb{I}_A.\label{eq:MIO_SDP_e}
    \end{align}
    \end{subequations}
\end{proposition}
\begin{proof}
    To get rid of the non-linearity, we consider the map $\Tilde{\cE} = t\cE$, where $t=1/p$, the inverse of success probability, and $\cE$ is a subnormalized MIO. The notations of $J_{\Tilde{\cE}}$ and $J_\cL$ are the \Choi matrix of maps $\Tilde{\cE}$ and $\cL$, respectively. $J_\cN$ is the \Choi matrix of the target channel $\cN$. Eq.~\eqref{eq:MIO_SDP_a} corresponds to the constraint $\Tilde{\cE}(\omega\ox\cdot)=\cL(\cdot)$. Eq.~\eqref{eq:MIO_SDP_b} and Eq.~\eqref{eq:MIO_SDP_c} imply that $\cE$ is a subnormalized MIO. \update{From the result in Ref.~\cite{watrous2009semidefinite} (Sec.4)}, Eq.~\eqref{eq:MIO_SDP_d} and Eq.~\eqref{eq:MIO_SDP_e} guarantee that the simulated channel $\cL$ includes the target channels $\cN$ up to error $\varepsilon$, which is $\frac{1}{2}||\cL-\cN||_\diamond\le\varepsilon$.
\end{proof}

A general qubit unitary possesses four real parameters. However, we can transform unitaries into each other without any additional cost by incorporating coherent unitaries before or after them. This observation implies the existence of an equivalence relation among qubit unitaries up to coherent unitaries. A unique representative of each equivalence class~\cite{dana2017resource,diaz2018using} is given in Eq.~\eqref{eq:rotational_channel}

\begin{equation}\label{eq:rotational_channel}
    U_\theta = \begin{pmatrix}
        \cos \theta & -\sin \theta\\
        \sin \theta & \cos \theta
    \end{pmatrix},
\end{equation}
\update{where $\theta \in [0, \pi/4]$}.

Here we choose the unitary channel $\cU_\theta^{(l)}(\cdot) = U_\theta^{\ox l} \cdot U_\theta^{\dagger\ox l}$, as the target channel to simulate. The success probability of unitary channel simulation $\cU_\theta^{(l)}$ can be derived from the robustness of the channel. We take the default that $\cU_\theta$ refers to $l=1$. The specific statement is shown in the following.

\begin{proposition}\label{prop:unitary_simulation_with_MIO}
    Given the triplet $(\cU_\theta^{(l)}, \Psi_m, \varepsilon=0)$, the maximal success probability of channel simulation is 
    \begin{align}
        P_{\rm MIO}(\cU_\theta^{(l)}, \Psi_m, \varepsilon=0) &= \min\Big\{1, \frac{m-1}{C_R(\cU_\theta^{(l)})}\Big\} \\ 
        &=\min\Big\{1, \frac{m-1}{(1+\sin{2\theta})^{l}-1}\Big\}.
    \end{align}
\end{proposition}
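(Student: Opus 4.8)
The plan is to reduce this proposition to Theorem~\ref{theo:MIO_simulatable} together with a computation of the robustness of coherence of the target channel. Applying Theorem~\ref{theo:MIO_simulatable} with $\cN=\cU_\theta^l$, resource $\Psi_m$ ($m\ge2$) and $\varepsilon=0$ gives at once
\begin{equation}
  P_{\rm MIO}(\cU_\theta^l,\Psi_m,0)=\min\Big\{1,\frac{m-1}{C_R(\cU_\theta^l)}\Big\},
\end{equation}
so the whole content is to establish $C_R(\cU_\theta^l)=(1+\sin2\theta)^l-1$, a value already identified in Ref.~\cite{diaz2018using}; I would reproduce its evaluation as follows.

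First I would recall the characterization of the robustness of a unitary channel. For $\cU(\cdot)=U\cdot U^\dagger$, the constraint $\cU\le\lambda\cM$ with $\cM\in{\rm MIO}$ forces, on each diagonal input $\proj{i}$, an incoherent state $\sigma_i$ with $\lambda\sigma_i\ge U\proj{i}U^\dagger=\ketbra{U_i}{U_i}$, where $U\ket{i}$ is the $i$-th column of $U$; the minimal such $\lambda$ for that input is $1+C_R(\ketbra{U_i}{U_i})=\big(\sum_j|U_{ji}|\big)^2$. One then checks (as in Ref.~\cite{diaz2018using}) that taking $\lambda$ equal to the maximum over $i$ of these numbers also suffices to complete $\cU$ to a genuine MIO, so $1+C_R(\cU)=\max_i\big(\sum_j|U_{ji}|\big)^2$, the squared maximal $\ell_1$-norm of a column of $U$.

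Second I would apply this to $U=U_\theta^{\otimes l}$ (observing $\cU_\theta^l=\cU_{U_\theta^{\otimes l}}$). Every column of $U_\theta^{\otimes l}$ is a tensor product of columns of $U_\theta$, and $\|v\otimes w\|_1=\|v\|_1\|w\|_1$; since both columns of $U_\theta$ have $\ell_1$-norm $\cos\theta+\sin\theta$ on the canonical representative range $\theta\in[0,\pi/4]$, every column of $U_\theta^{\otimes l}$ has $\ell_1$-norm $(\cos\theta+\sin\theta)^l$. Hence $1+C_R(\cU_\theta^l)=(\cos\theta+\sin\theta)^{2l}=(1+2\sin\theta\cos\theta)^l=(1+\sin2\theta)^l$, i.e. $C_R(\cU_\theta^l)=(1+\sin2\theta)^l-1$; substituting into the expression from Theorem~\ref{theo:MIO_simulatable} yields the claim. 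Equivalently, the same value follows from submultiplicativity $1+C_R(\cU_\theta^{\otimes l})\le(1+C_R(\cU_\theta))^l$ (immediate by tensoring optimal decompositions) together with the matching lower bound from the column computation.

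The step I expect to be the real work is the first one: verifying that the per-input bound $\max_i(\sum_j|U_{ji}|)^2$ is actually achieved by a full MIO, i.e. that the diagonal-input constraints are the binding ones and the remaining complete-positivity and trace-preservation requirements can be met simultaneously. For a general channel the generalized (MIO) robustness is delicate and is not multiplicative, but for a \emph{unitary} channel the Choi operator is rank one and the optimal feasible $\cM$ can be written down explicitly from the columns of $U$; this is precisely what is already established in Ref.~\cite{diaz2018using}, so in the write-up I would cite that rather than redo the construction, keeping the present proof short.
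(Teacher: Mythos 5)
Your proposal is correct and follows essentially the same route as the paper: reduce to Theorem~\ref{theo:MIO_simulatable} and evaluate $C_R(\cU_\theta^l)=(1+\sin2\theta)^l-1$ via the cohering-power/log-robustness characterization of Ref.~\cite{diaz2018using}, whose nontrivial content (that the diagonal-input bound is attained by a genuine MIO) both you and the paper delegate to that reference. The only difference is packaging: the paper computes $C_R(\cU_\theta^1)=\sin2\theta$ using robustness $=\ell_1$-coherence for qubits and then invokes multiplicativity of the state robustness under tensor products, while you read off the squared $\ell_1$-norms of the columns of $U_\theta^{\otimes l}$ directly --- the same facts in different form.
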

\begin{proof}
    The cohering power~\cite{mani2015cohering, bu2017cohering} of a channel $P_R(\cN)$ is defined as 
    \begin{equation}
        P_R(\cN) = \max_i \log(1+C_R(\cN(\proj{i}))).
    \end{equation}
    We denote $i^*$ to be the optimal solution, i.e., $P_R(\cN)=\log(1+C_R(\cN(\proj{i^*})))$. In~\cite{diaz2018using}, it has been proved that the cohering power of a channel $P_R(\cN)$ is equivalent to the log-robustness of the channel, which is $P_R(\cN) = \log(1+C_R(\cN))$.
    It is straightforward to have 
    \begin{equation}
        C_R(\cN) = C_R(\cN(\proj{i^*})).
    \end{equation}
    
    If we replace the unitary channel $\cU_\theta$ into the equation, we can deduce $C_R(\cU_\theta) = C_R(\cU_\theta(\proj{i^*}))$ directly. For a single-qubit channel, the robustness of a state is equal to its $l_1$ norm of coherence~\cite{napoli2016robustness}, which is the summation of the absolute value of all non-diagonal elements~\cite{baumgratz2014quantifying}. Then, we have
    \begin{equation}
        C_R(\cU_\theta) = C_R(\cU_\theta(\proj{i^*})) = C_{l_1}(\cU_\theta(\proj{i^*}))=\sin2\theta.
    \end{equation}
    Note that the robustness is multiplicative under the tensor product of states~\cite{zhu2017coherence}, specifically
    \begin{equation}
        1 + C_R (\rho_1 \ox \rho_2) = (1 + C_R (\rho_1)) (1 + C_R (\rho_2)).
    \end{equation}
    We correspondingly have 
    \begin{equation}
        C_R(\cU_\theta^{(l)}) = (1+\sin2\theta)^{l}-1.
    \end{equation}
    Recalling Theorem~\ref{theo:MIO_simulatable}, the maximal success probability is the ratio between the robustness of the resource state and the robustness of the target coherent channel, and the success probability $p$ cannot exceed 1. Then we have 
    \begin{equation}
        P_{\rm MIO}(\cU_\theta^{(l)}, \Psi_m, \varepsilon=0) =\min\Big\{1, \frac{m-1}{(1+\sin2\theta)^{(l)}-1}\Big\},
    \end{equation}
    which completes the proof.
\end{proof}

Hence, we obtain the analytic expression of the maximal success probability of the exact simulation of unitary channel $\cU_\theta^{(l)}$ with MIO. We also conduct numerical experiments to show the approximate probabilistic channel simulation. We consider two-qubit unitary channels $\cU_\theta^{l=2}$ $(0 \le \theta\le \pi/4)$ as the target channels. The resource state used is the rank-2 maximally coherent state $\Psi_2$. For different error tolerance $\varepsilon\in\{0, 0.05, 0.1, 0.15, 0.2\}$, the maximal success probabilities of channel simulation with MIO are calculated by the SDP given in Proposition~\ref{prop:suc_p_MIO}. The results are shown in Fig.~\ref{fig:Prob_respect_angle}

\begin{figure}[ht]
\centering
\includegraphics[width=\linewidth]{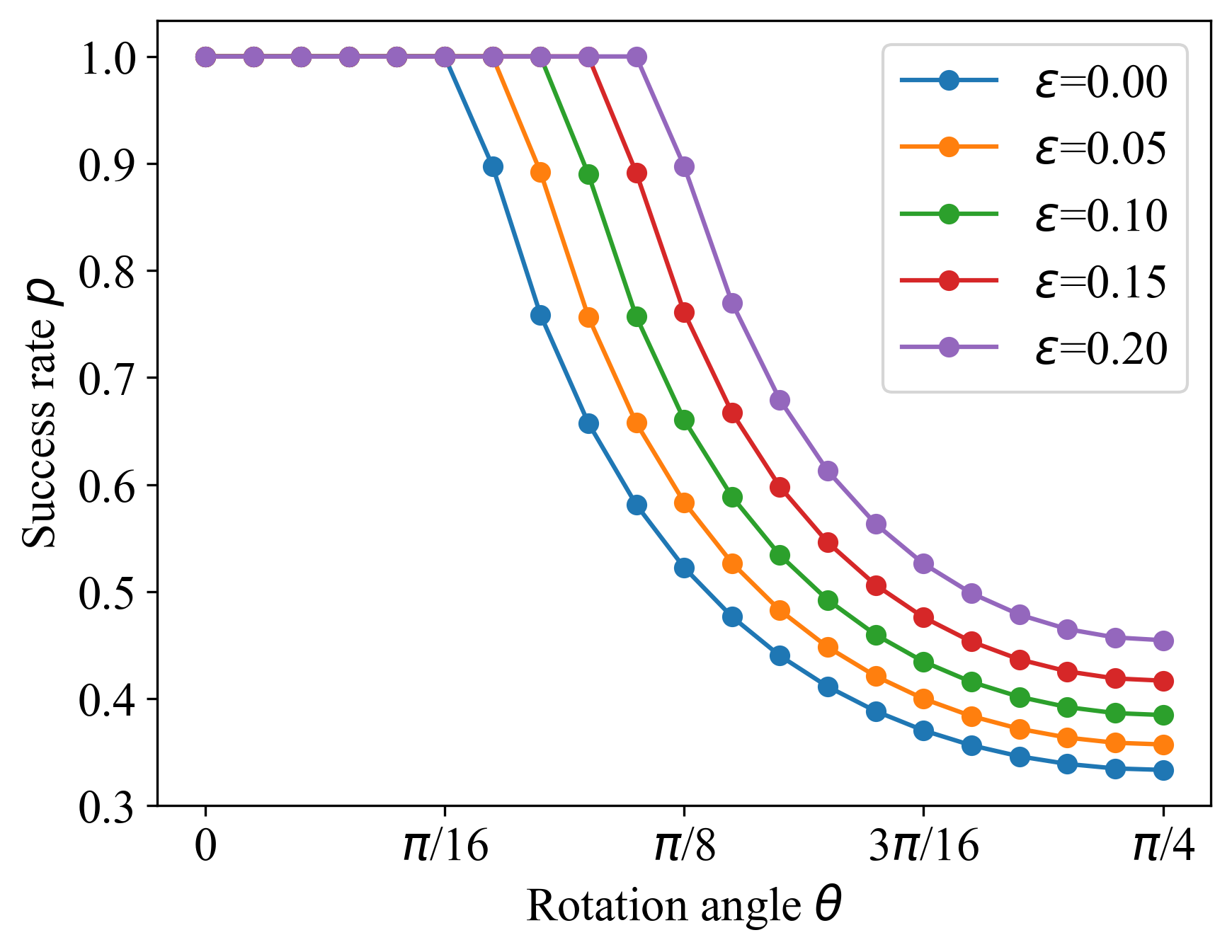}
    \caption{Success probability $p$ of unitary channel simulation with coherent state $\Psi_2$ and MIO. The rotation angle $\theta$ is the parameter of target unitary channel $\cU_\theta^{(2)}$, which is chosen from 0 to $\pi/4$. The resource state is the rank-2 maximally coherent state $\Psi_2$. The five curves from bottom to top correspond to the error tolerance $\varepsilon$ equaling to \{0, 0.05, 0.1, 0.15, 0.2\}, respectively.}
    \label{fig:Prob_respect_angle}
\end{figure}

The robustness of the unitary channel $C_R(\cU_\theta^{(2)})$ increases with respect to the increase of the rotation angle $\theta$. When the angle is small, the resource state $\Psi_2$ has more resources than the required resource of the task, so we can simulate such a channel deterministically. When the rotation angle exceeds some threshold, the coherence of the resource state $\Psi_2$ is not enough for its deterministic implementation. Thus, one can never implement the target channel exactly in the deterministic scenario, and can only realize it approximately, while in the probabilistic scenario, one can exactly implement the target channel at the expense of the reduced success probability.


\section{Probabilistic channel simulation with DIO}

It was proved in Ref.~\cite{theurer2019quantifying} that if the target channel $\cN$ is not a resource nonactivating channel, then it cannot be simulated exactly by DIO with any resource state deterministically. We further show that such an exact simulation is not possible even in probabilistic scenarios. Also, SDP for calculating the success probability of the channel simulation is provided.

In the previous part, we have shown that arbitrary channels can be simulated by using an appropriate resource state and MIO. But for DIO, it is not the same story. Explicitly, we have the following theorem.
\begin{theorem}\label{theo:no-go_theorem}\update{{\rm ~\cite{theurer2019quantifying} (Appendix Eq.(109))}}
If a quantum channel $\cN$ can be implemented by a DIO channel $\cM$ using resource state $\omega$, i.e. $\cM(\omega\ox\cdot) = \cN(\cdot)$, then $\cN$ satisfies $\Delta\circ\cN=\Delta\circ\cN\circ \Delta$.
\end{theorem}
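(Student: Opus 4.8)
The plan is to use the defining commutation property of DIO directly, with essentially no computation beyond it. Recall that a DIO channel $\cM$ from $RA$ to $B$ is characterized by $\Delta_B\circ\cM=\cM\circ\Delta_{RA}$, where $\Delta_X$ denotes the completely dephasing map on system $X$ in its reference basis. Two elementary facts will be used: (i) the reference basis of the composite system $RA$ is the product basis, so $\Delta_{RA}$ factorizes as $\Delta_R\ox\Delta_A$, and in particular $\Delta_{RA}(\omega\ox\cdot)=\Delta_R(\omega)\ox\Delta_A(\cdot)$; and (ii) dephasing is idempotent, $\Delta_A\circ\Delta_A=\Delta_A$.

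First I would substitute the simulation identity $\cN(\cdot)=\cM(\omega\ox\cdot)$ into $\Delta_B\circ\cN$ and push $\Delta_B$ through $\cM$ using the DIO condition, obtaining $\Delta_B\circ\cN(\cdot)=\cM\circ\Delta_{RA}(\omega\ox\cdot)=\cM\big(\Delta_R(\omega)\ox\Delta_A(\cdot)\big)$. Then I would repeat the same three moves starting instead from $\Delta_B\circ\cN\circ\Delta_A$; this produces $\cM\big(\Delta_R(\omega)\ox\Delta_A(\Delta_A(\cdot))\big)$, which collapses to $\cM\big(\Delta_R(\omega)\ox\Delta_A(\cdot)\big)$ by idempotence of $\Delta_A$. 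Comparing the two expressions yields $\Delta_B\circ\cN=\Delta_B\circ\cN\circ\Delta_A$, i.e. the claim, with the system labels $A,B$ being the ones suppressed in the statement.

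There is no real obstacle here: the only points needing care are reading the DIO condition as an equality of maps whose input and output dephasings act on different systems, and checking that the dephasing on $RA$ genuinely factors through the product reference basis. It is worth noting, and I would add as a remark, that the identical manipulation applies verbatim to the probabilistic/subnormalized setting of Lemma~\ref{lemma:simplify}: if a subnormalized DIO $\cE$ satisfies $\cE(\omega\ox\cdot)=p\,\cL(\cdot)$ with $p>0$, then $\Delta\circ\cL=\Delta\circ\cL\circ\Delta$, so a target channel failing $\Delta\circ\cN=\Delta\circ\cN\circ\Delta$ cannot be simulated by DIO with \emph{any} nonzero success probability — which is the bridge from this theorem to the no-go statement announced in the abstract about resource nonactivating channels.
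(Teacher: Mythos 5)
Your proposal is correct and follows essentially the same route as the paper's proof: substitute the simulation identity, commute the dephasing through $\cM$ via the DIO condition, and use idempotence of $\Delta$ (together with its factorization over the product basis of $RA$, which the paper leaves implicit) to conclude $\Delta\circ\cN=\Delta\circ\cN\circ\Delta$. Your closing remark on the subnormalized/probabilistic case matches the paper's subsequent corollary, so nothing is missing.
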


\begin{proof}
    Since $\cN(\rho)=\cM(\omega\ox\cdot)$ holds for any quantum state, we directly have $\Delta\circ\cN(\cdot)=\Delta\circ\cM(\omega\ox\cdot)$. Considering $\cM$ is a DIO channel, which means $\Delta\circ\cM(\omega\ox\cdot) = \cM\circ\Delta(\omega\ox\cdot)$. we arrive at 
    \begin{equation}\label{eq:proof_no_go_1}
        \Delta\circ\cN(\cdot) = \cM\circ\Delta(\omega\ox\cdot).
    \end{equation}
    If the input state is applied to the completely dephasing map, we directly have 
    \begin{align}
        \Delta\circ\cN(\Delta(\cdot)) &= \cM\circ\Delta(\omega\ox\Delta(\cdot))\\
        \Rightarrow\Delta\circ\cN\circ\Delta(\cdot) &=  \cM\circ\Delta(\omega\ox\cdot)\label{eq:proof_no_go_2}.
    \end{align}
    Combine Eq.~\eqref{eq:proof_no_go_1} and Eq.~\eqref{eq:proof_no_go_2}, we have

    \begin{equation}\label{eq:proof_no_go_3}
        \Delta\circ\cN(\cdot) = \Delta\circ\cN\circ\Delta(\cdot)
    \end{equation}
    Eq.~\eqref{eq:proof_no_go_3} holds for any quantum state, which implies $\Delta\circ\cN=\Delta\circ\cN\circ\Delta$, i.e., $\cN$ is a resource nonactivating channel. The proof is complete.
\end{proof}

In other words, this theorem implies that if a quantum channel $\cN$ does not satisfy the condition $\Delta\circ\cN = \Delta\circ\cN\circ\Delta$ (also known as a resource nonactivating channel~\cite{liu2017resource}), then it cannot be simulated by DIO.
Also, it is straightforward to extend this theorem to the probabilistic scenario, which is shown in the following corollary~\ref{coro:no_go_coro}. 

\begin{corollary}~\label{coro:no_go_coro}
    If a quantum channel $\cN$ can be simulated exactly by a DIO channel $\cM$ using resource state $\omega$ with non-zero probability $p$, i.e., $\cM(\omega\ox\cdot) = p\ketbra{0}{0}\ox\cN(\cdot) + (1-p) \ketbra{1}{1}\ox\tr[\cdot]\frac{\mathbb{I}}{d}$, where $\mathbb{I}$ is the identity and $d$ is the dimension, then $\cN$ satisfies $\Delta\circ\cN=\Delta\circ\cN\circ \Delta$. 
\end{corollary}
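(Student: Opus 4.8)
The plan is to reduce this corollary to Theorem~\ref{theo:no-go_theorem} by regarding the flagged output as one composite system. Set $\cN'(\cdot) := \cM(\omega\ox\cdot)$, the channel from $A$ to $FB$ that the DIO $\cM$ implements \emph{exactly} (deterministically) once the resource $\omega$ is fixed on $R$; by hypothesis $\cN'(\cdot) = p\,\proj{0}_F\ox\cN(\cdot) + (1-p)\,\proj{1}_F\ox\tr[\cdot]\,\mathbb{I}/d$, and $\cN'$ is manifestly CPTP. Since $\cN'$ is of the form ``a DIO acting on $\omega\ox(\cdot)$'', Theorem~\ref{theo:no-go_theorem} applies directly to $\cN'$ and yields $\Delta\circ\cN' = \Delta\circ\cN'\circ\Delta$, where the outer $\Delta$ is the dephasing on $FB$ and the inner one is the dephasing on $A$.

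Next I would expand both sides of this identity. The reference basis on $FB$ is the product basis, so the dephasing factorizes as $\Delta_{FB} = \Delta_F\ox\Delta_B$; moreover $\proj{0}_F$ and $\proj{1}_F$ are fixed by $\Delta_F$, and $\mathbb{I}/d$ is fixed by $\Delta_B$. Hence the left-hand side equals $p\,\proj{0}_F\ox(\Delta\circ\cN)(\cdot) + (1-p)\,\proj{1}_F\ox\tr[\cdot]\,\mathbb{I}/d$, while the right-hand side — obtained by replacing the input $\rho$ by $\Delta(\rho)$ and using $\tr[\Delta(\rho)] = \tr[\rho]$ — equals $p\,\proj{0}_F\ox(\Delta\circ\cN\circ\Delta)(\cdot) + (1-p)\,\proj{1}_F\ox\tr[\cdot]\,\mathbb{I}/d$. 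Comparing the two block forms, the flag-$1$ blocks coincide identically, and the flag-$0$ blocks give $p\,\Delta\circ\cN = p\,\Delta\circ\cN\circ\Delta$. Dividing by $p>0$ yields $\Delta\circ\cN = \Delta\circ\cN\circ\Delta$, i.e.\ $\cN$ is resource nonactivating.

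An equivalent route, if one prefers not to invoke Theorem~\ref{theo:no-go_theorem} as a black box, is to rerun its proof on the flagged equation: apply $\Delta_{FB}$ to both sides of $\cM(\omega\ox\cdot) = \cN'(\cdot)$, use the DIO property $\Delta_{FB}\circ\cM = \cM\circ\Delta_{RA}$, apply a second dephasing, and read off the flag-$0$ component. The only delicate point — and the closest thing to an obstacle here — is the bookkeeping of which subsystems each dephasing acts on: one must use that $\Delta$ on the classical-flag-plus-output register splits as $\Delta_F\ox\Delta_B$ and that the flag projectors and $\mathbb{I}/d$ are fixed points of the relevant dephasings, so that the branch structure of $\cN'$ is preserved through the argument and the positive scalar $p$ can legitimately be cancelled at the end. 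Everything else is a routine substitution.
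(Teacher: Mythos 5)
Your proposal is correct and follows essentially the same route as the paper: define the flagged channel $\cN'(\cdot)=\cM(\omega\ox\cdot)$, apply Theorem~\ref{theo:no-go_theorem} to conclude $\Delta\circ\cN'=\Delta\circ\cN'\circ\Delta$, and extract the statement for $\cN$. The only difference is that you spell out the block-diagonal bookkeeping ($\Delta_{FB}=\Delta_F\ox\Delta_B$, comparing the flag-$0$ components, cancelling $p>0$) that the paper's proof leaves implicit, which is a welcome clarification rather than a deviation.
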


\begin{proof}
    Denote $\cN'(\cdot) = p\ketbra{0}{0}\ox\cN(\cdot) + (1-p) \ketbra{1}{1}\ox\tr[\cdot]\frac{\mathbb{I}}{d}$. From Theorem~\ref{theo:no-go_theorem}, $\cN'$ is a nonactivating channel, then we directly have $\Delta\circ\cN=\Delta\circ\cN\circ \Delta$, which completes the proof. 
\end{proof}

Theorem~\ref{theo:no-go_theorem} and \update{Corollary}~\ref{coro:no_go_coro} tell us that not all quantum channels can be simulated exactly by DIO, even probabilistically. Take single-qubit unitary channel $\cU_\theta$ and quantum state$\proj{+}$ as an example. One can easily obtain 
\begin{align}
    \Delta\circ\cU_\theta(\update{\proj{+}})&=\begin{pmatrix}
        \frac{1}{2}-\cos\theta\sin\theta & 0\\
        0 & \frac{1}{2}+\cos\theta\sin\theta
    \end{pmatrix}\nonumber\\
    \Delta\circ\cU_\theta\circ\Delta(\update{\proj{+}})&=\begin{pmatrix}
        \frac{1}{2} & 0\\
        0 & \frac{1}{2}
    \end{pmatrix}.\nonumber
\end{align}

$\Delta\circ\cU_\theta\neq \Delta\circ\cU_\theta\circ\Delta$ represents that a single qubit unitary channel $\cU_\theta$ is not a resource nonactivating channel. Thus a qubit unitary channel cannot be exactly simulated by DIO even probabilistically. 

If a quantum channel $\cN$ can be simulated by DIO, then the success probability can be efficiently computed by the SDP as shown in Proposition~\ref{prop:DIO_simulation}.

\begin{proposition}\label{prop:DIO_simulation}
    For a given triplet $(\cN_{A\rightarrow B}, \omega, \varepsilon)$, the maximal success probability to simulate the target channel with DIO is given by $P_{\rm DIO}(\cN_{A\rightarrow B}, \omega, \varepsilon) = 1/t_{\rm min}$, where $t_{\rm min}$ is given by 
    \begin{subequations}\label{eq:DIO_SDP}
    \begin{align}
        t_{\rm min} = &\min ~ t ;\nonumber\\
        {\rm s.t.}& \tr_R[J_{\Tilde{\cE}}(\omega^T\ox\mathbb{I}_A\ox\mathbb{I}_B)] = J_{\cL_{AB}};\label{eq:DIO_SDP_a}\\
        &J_{\Tilde{\cE}} \ge 0, \tr_{B}[J_{\Tilde{\cE}}] \le t \mathbb{I}_{RA};\label{eq:DIO_SDP_b}\\
        &\reupdate{\tr [J_{\Tilde{\cE}}(\ketbra{i,j,k}{m,n,k'}_{RAB})]=0, ~\forall i,j,m,n, k\neq k'};\label{eq:DIO_SDP_d}\\
        &\tr_B[J_\cL] = \mathbb{I}_A ;\label{eq:DIO_SDP_e}\\
        &Z\ge 0, Z\ge J_\cL - J_\cN, 
        \tr_B[Z]\le \varepsilon \mathbb{I}_A;\label{eq:DIO_SDP_f}
    \end{align}
    \end{subequations}
\end{proposition}
Compared with the SDP for success probability of channel simulation with MIO in Eq.~\eqref{eq:MIO_SDP}, the SDP in Eq.~\eqref{eq:DIO_SDP} has 
\reupdate{additional constraints on the matrix components of $J_{\Tilde{\cE}}$ with $(i,j)\neq (m,n)$ as in Eq.~\eqref{eq:DIO_SDP_d}, which corresponds to $\Delta(\cM(\ketbra{i,j}{m,n}_{RA}))=0,~\forall (i,j)\neq (m,n)$}.

\begin{figure}[ht]
\centering
\includegraphics[width=\linewidth]{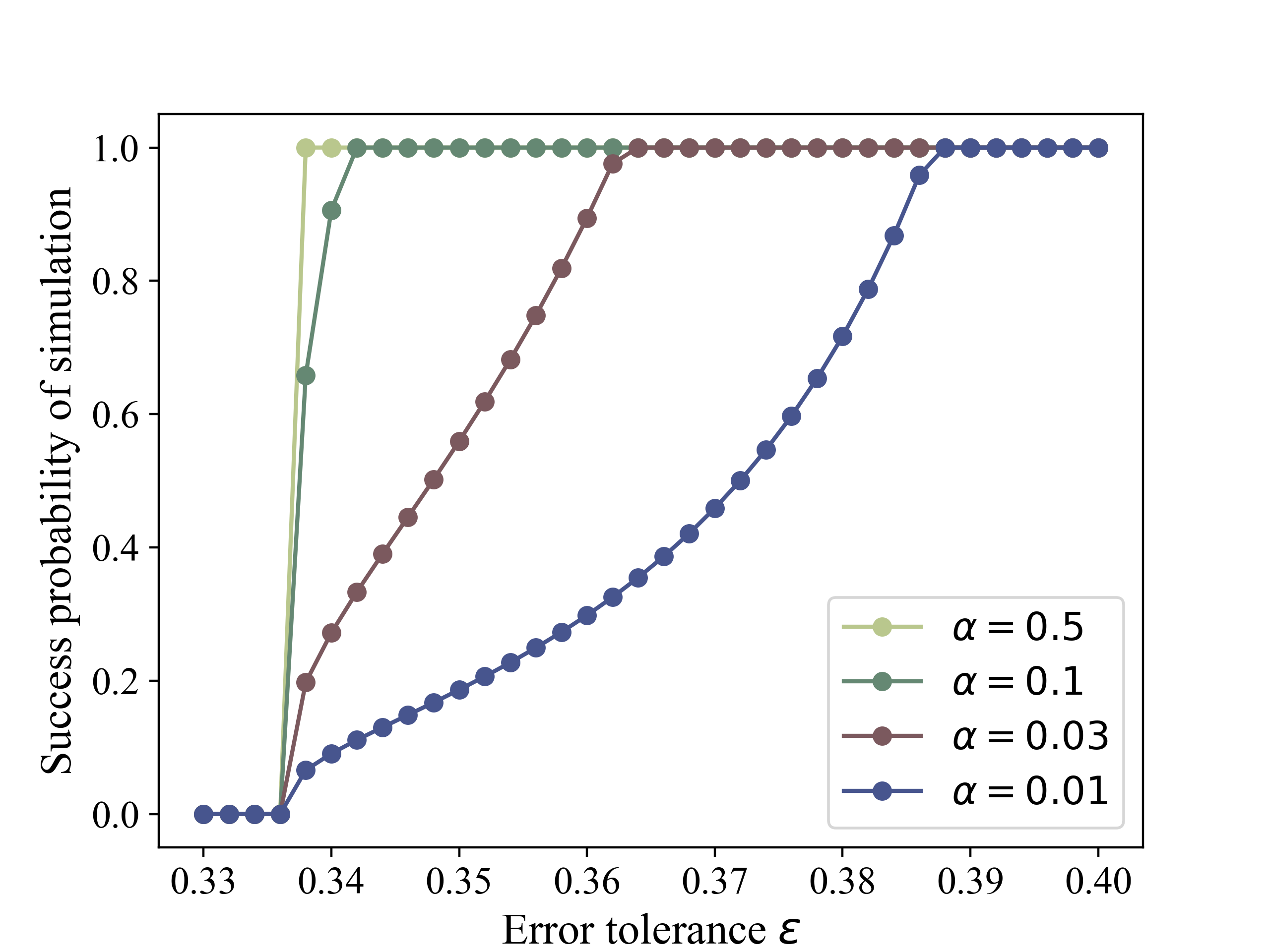}
    \caption{Success probability of simulating a random channel by DIO with different error tolerances. The resource state is pure coherent state $\ket{\psi}=\sqrt{\alpha}\ket{0} + \sqrt{1-\alpha}\ket{0}$, with $\alpha\in\{0.01,0.03,0.1, 0.5\}$.}
    \label{fig:prob_sim_random_channel_unitary_DIO}
\end{figure}

We conduct a numerical experiment to simulate a target channel with DIO. We consider a two-qubit random channel as the target channel, the \Choi matrix of which is shown in Appendix \ref{App:Random_channel}. The resource state is a coherent pure state, $\ket{\psi}=\sqrt{\alpha}\ket{0} + \sqrt{1-\alpha}\ket{1}$ for $\alpha\in [0,0.5]$. The state's coherence increases as parameter $\alpha$ increases. When $\alpha=0.5$, the resource state is a rank-2 maximally coherent state $\Psi_2$. The maximal success probabilities of channel simulation by DIO with different error tolerances and coherent states are calculated by the SDP (Eq.~\eqref{eq:DIO_SDP}).

\update{From the results shown in Fig.~\ref{fig:prob_sim_random_channel_unitary_DIO}, \update{we can see that the probabilistic channel simulation with smaller probability $p$ admits smaller error tolerance.}
In other words, probabilistic relaxation is also effective for DIO. \update{As $\alpha$ increases (larger coherence)}, the success probability of simulation increases faster \update{and gets closer to the case of the maximally coherent state ($\alpha=0.5$)}.}

\section{Conclusion}

We have focused on the probabilistic channel simulation with MIO and DIO, respectively. For the MIO part, we talked about three cases: If the resource state is the maximally coherent state, we provide an analytical expression for the maximal success probability. If we select any pure coherent state as the resource state, the maximal success probability is guaranteed to be greater than zero. If the resource state $\omega$ is a general coherent state, we offer an efficiently computable SDP for achieving the maximal success probability. For the DIO part, we show that not all quantum channels can be exactly simulated by DIO, even probabilistically. Furthermore, we present an efficiently computable SDP for attaining the maximal success probability of channel simulation with DIO.

For further research, it would be interesting to apply the framework of probabilistic channel simulation to other important quantum resources such as entanglement and magic.

\textbf{Acknowledgements.--} B. Zhao would like to thank Xin Wang, Xuanqiang Zhao and Zhiping Liu for their fruitful discussion. This work is supported by Ministry of Education, Culture, Sports, Science, and Technology Quantum Leap Flagship Program Grants No. JPMXS0118067394 and JPMXS0120319794, and JSTJapan Science and Technology Agency COINEXT Grant No. JPMJPF2014. 

\bibliography{references.bib}

\begin{thebibliography}{50}%
\makeatletter
\providecommand \@ifxundefined [1]{%
 \@ifx{#1\undefined}
}%
\providecommand \@ifnum [1]{%
 \ifnum #1\expandafter \@firstoftwo
 \else \expandafter \@secondoftwo
 \fi
}%
\providecommand \@ifx [1]{%
 \ifx #1\expandafter \@firstoftwo
 \else \expandafter \@secondoftwo
 \fi
}%
\providecommand \natexlab [1]{#1}%
\providecommand \enquote  [1]{``#1''}%
\providecommand \bibnamefont  [1]{#1}%
\providecommand \bibfnamefont [1]{#1}%
\providecommand \citenamefont [1]{#1}%
\providecommand \href@noop [0]{\@secondoftwo}%
\providecommand \href [0]{\begingroup \@sanitize@url \@href}%
\providecommand \@href[1]{\@@startlink{#1}\@@href}%
\providecommand \@@href[1]{\endgroup#1\@@endlink}%
\providecommand \@sanitize@url [0]{\catcode `\\12\catcode `\$12\catcode `\&12\catcode `\#12\catcode `\^12\catcode `\_12\catcode `\%12\relax}%
\providecommand \@@startlink[1]{}%
\providecommand \@@endlink[0]{}%
\providecommand \url  [0]{\begingroup\@sanitize@url \@url }%
\providecommand \@url [1]{\endgroup\@href {#1}{\urlprefix }}%
\providecommand \urlprefix  [0]{URL }%
\providecommand \Eprint [0]{\href }%
\providecommand \doibase [0]{https://doi.org/}%
\providecommand \selectlanguage [0]{\@gobble}%
\providecommand \bibinfo  [0]{\@secondoftwo}%
\providecommand \bibfield  [0]{\@secondoftwo}%
\providecommand \translation [1]{[#1]}%
\providecommand \BibitemOpen [0]{}%
\providecommand \bibitemStop [0]{}%
\providecommand \bibitemNoStop [0]{.\EOS\space}%
\providecommand \EOS [0]{\spacefactor3000\relax}%
\providecommand \BibitemShut  [1]{\csname bibitem#1\endcsname}%
\let\auto@bib@innerbib\@empty
\bibitem [{\citenamefont {Chitambar}\ and\ \citenamefont {Gour}(2019)}]{chitambar2019quantum}%
  \BibitemOpen
  \bibfield  {author} {\bibinfo {author} {\bibfnamefont {E.}~\bibnamefont {Chitambar}}\ and\ \bibinfo {author} {\bibfnamefont {G.}~\bibnamefont {Gour}},\ }\bibfield  {title} {\bibinfo {title} {Quantum resource theories},\ }\href {https://journals.aps.org/rmp/abstract/10.1103/RevModPhys.91.025001} {\bibfield  {journal} {\bibinfo  {journal} {Reviews of modern physics}\ }\textbf {\bibinfo {volume} {91}},\ \bibinfo {pages} {025001} (\bibinfo {year} {2019})}\BibitemShut {NoStop}%
\bibitem [{\citenamefont {Horodecki}\ and\ \citenamefont {Oppenheim}(2013)}]{horodecki2013quantumness}%
  \BibitemOpen
  \bibfield  {author} {\bibinfo {author} {\bibfnamefont {M.}~\bibnamefont {Horodecki}}\ and\ \bibinfo {author} {\bibfnamefont {J.}~\bibnamefont {Oppenheim}},\ }\bibfield  {title} {\bibinfo {title} {(quantumness in the context of) resource theories},\ }\href {https://www.worldscientific.com/doi/abs/10.1142/S0217979213450197} {\bibfield  {journal} {\bibinfo  {journal} {International Journal of Modern Physics B}\ }\textbf {\bibinfo {volume} {27}},\ \bibinfo {pages} {1345019} (\bibinfo {year} {2013})}\BibitemShut {NoStop}%
\bibitem [{\citenamefont {Streltsov}\ \emph {et~al.}(2017)\citenamefont {Streltsov}, \citenamefont {Adesso},\ and\ \citenamefont {Plenio}}]{streltsov2017colloquium}%
  \BibitemOpen
  \bibfield  {author} {\bibinfo {author} {\bibfnamefont {A.}~\bibnamefont {Streltsov}}, \bibinfo {author} {\bibfnamefont {G.}~\bibnamefont {Adesso}},\ and\ \bibinfo {author} {\bibfnamefont {M.~B.}\ \bibnamefont {Plenio}},\ }\bibfield  {title} {\bibinfo {title} {Colloquium: Quantum coherence as a resource},\ }\href {https://journals.aps.org/rmp/abstract/10.1103/RevModPhys.89.041003} {\bibfield  {journal} {\bibinfo  {journal} {Reviews of Modern Physics}\ }\textbf {\bibinfo {volume} {89}},\ \bibinfo {pages} {041003} (\bibinfo {year} {2017})}\BibitemShut {NoStop}%
\bibitem [{\citenamefont {Aberg}(2006)}]{aberg2006quantifying}%
  \BibitemOpen
  \bibfield  {author} {\bibinfo {author} {\bibfnamefont {J.}~\bibnamefont {Aberg}},\ }\bibfield  {title} {\bibinfo {title} {Quantifying superposition},\ }\href {https://arxiv.org/abs/quant-ph/0612146} {\bibfield  {journal} {\bibinfo  {journal} {arXiv preprint quant-ph/0612146}\ } (\bibinfo {year} {2006})}\BibitemShut {NoStop}%
\bibitem [{\citenamefont {Fang}\ \emph {et~al.}(2018)\citenamefont {Fang}, \citenamefont {Wang}, \citenamefont {Lami}, \citenamefont {Regula},\ and\ \citenamefont {Adesso}}]{fang2018probabilistic}%
  \BibitemOpen
  \bibfield  {author} {\bibinfo {author} {\bibfnamefont {K.}~\bibnamefont {Fang}}, \bibinfo {author} {\bibfnamefont {X.}~\bibnamefont {Wang}}, \bibinfo {author} {\bibfnamefont {L.}~\bibnamefont {Lami}}, \bibinfo {author} {\bibfnamefont {B.}~\bibnamefont {Regula}},\ and\ \bibinfo {author} {\bibfnamefont {G.}~\bibnamefont {Adesso}},\ }\bibfield  {title} {\bibinfo {title} {Probabilistic distillation of quantum coherence},\ }\href {https://journals.aps.org/prl/abstract/10.1103/PhysRevLett.121.070404} {\bibfield  {journal} {\bibinfo  {journal} {Physical review letters}\ }\textbf {\bibinfo {volume} {121}},\ \bibinfo {pages} {070404} (\bibinfo {year} {2018})}\BibitemShut {NoStop}%
\bibitem [{\citenamefont {Regula}\ \emph {et~al.}(2018)\citenamefont {Regula}, \citenamefont {Fang}, \citenamefont {Wang},\ and\ \citenamefont {Adesso}}]{regula2018one}%
  \BibitemOpen
  \bibfield  {author} {\bibinfo {author} {\bibfnamefont {B.}~\bibnamefont {Regula}}, \bibinfo {author} {\bibfnamefont {K.}~\bibnamefont {Fang}}, \bibinfo {author} {\bibfnamefont {X.}~\bibnamefont {Wang}},\ and\ \bibinfo {author} {\bibfnamefont {G.}~\bibnamefont {Adesso}},\ }\bibfield  {title} {\bibinfo {title} {One-shot coherence distillation},\ }\href {https://journals.aps.org/prl/abstract/10.1103/PhysRevLett.121.010401} {\bibfield  {journal} {\bibinfo  {journal} {Physical review letters}\ }\textbf {\bibinfo {volume} {121}},\ \bibinfo {pages} {010401} (\bibinfo {year} {2018})}\BibitemShut {NoStop}%
\bibitem [{\citenamefont {Zhao}\ \emph {et~al.}(2018)\citenamefont {Zhao}, \citenamefont {Liu}, \citenamefont {Yuan}, \citenamefont {Chitambar},\ and\ \citenamefont {Ma}}]{zhao2018one}%
  \BibitemOpen
  \bibfield  {author} {\bibinfo {author} {\bibfnamefont {Q.}~\bibnamefont {Zhao}}, \bibinfo {author} {\bibfnamefont {Y.}~\bibnamefont {Liu}}, \bibinfo {author} {\bibfnamefont {X.}~\bibnamefont {Yuan}}, \bibinfo {author} {\bibfnamefont {E.}~\bibnamefont {Chitambar}},\ and\ \bibinfo {author} {\bibfnamefont {X.}~\bibnamefont {Ma}},\ }\bibfield  {title} {\bibinfo {title} {One-shot coherence dilution},\ }\href {One-shot coherence dilution} {\bibfield  {journal} {\bibinfo  {journal} {Physical review letters}\ }\textbf {\bibinfo {volume} {120}},\ \bibinfo {pages} {070403} (\bibinfo {year} {2018})}\BibitemShut {NoStop}%
\bibitem [{\citenamefont {Napoli}\ \emph {et~al.}(2016)\citenamefont {Napoli}, \citenamefont {Bromley}, \citenamefont {Cianciaruso}, \citenamefont {Piani}, \citenamefont {Johnston},\ and\ \citenamefont {Adesso}}]{napoli2016robustness}%
  \BibitemOpen
  \bibfield  {author} {\bibinfo {author} {\bibfnamefont {C.}~\bibnamefont {Napoli}}, \bibinfo {author} {\bibfnamefont {T.~R.}\ \bibnamefont {Bromley}}, \bibinfo {author} {\bibfnamefont {M.}~\bibnamefont {Cianciaruso}}, \bibinfo {author} {\bibfnamefont {M.}~\bibnamefont {Piani}}, \bibinfo {author} {\bibfnamefont {N.}~\bibnamefont {Johnston}},\ and\ \bibinfo {author} {\bibfnamefont {G.}~\bibnamefont {Adesso}},\ }\bibfield  {title} {\bibinfo {title} {Robustness of coherence: an operational and observable measure of quantum coherence},\ }\href {https://journals.aps.org/prl/abstract/10.1103/PhysRevLett.116.150502} {\bibfield  {journal} {\bibinfo  {journal} {Physical review letters}\ }\textbf {\bibinfo {volume} {116}},\ \bibinfo {pages} {150502} (\bibinfo {year} {2016})}\BibitemShut {NoStop}%
\bibitem [{\citenamefont {Baumgratz}\ \emph {et~al.}(2014)\citenamefont {Baumgratz}, \citenamefont {Cramer},\ and\ \citenamefont {Plenio}}]{baumgratz2014quantifying}%
  \BibitemOpen
  \bibfield  {author} {\bibinfo {author} {\bibfnamefont {T.}~\bibnamefont {Baumgratz}}, \bibinfo {author} {\bibfnamefont {M.}~\bibnamefont {Cramer}},\ and\ \bibinfo {author} {\bibfnamefont {M.~B.}\ \bibnamefont {Plenio}},\ }\bibfield  {title} {\bibinfo {title} {Quantifying coherence},\ }\href {https://arxiv.org/abs/1311.0275} {\bibfield  {journal} {\bibinfo  {journal} {Physical review letters}\ }\textbf {\bibinfo {volume} {113}},\ \bibinfo {pages} {140401} (\bibinfo {year} {2014})}\BibitemShut {NoStop}%
\bibitem [{\citenamefont {Chitambar}\ and\ \citenamefont {Gour}(2016{\natexlab{a}})}]{chitambar2016comparison}%
  \BibitemOpen
  \bibfield  {author} {\bibinfo {author} {\bibfnamefont {E.}~\bibnamefont {Chitambar}}\ and\ \bibinfo {author} {\bibfnamefont {G.}~\bibnamefont {Gour}},\ }\bibfield  {title} {\bibinfo {title} {Comparison of incoherent operations and measures of coherence},\ }\href {https://journals.aps.org/pra/abstract/10.1103/PhysRevA.94.052336} {\bibfield  {journal} {\bibinfo  {journal} {Physical Review A}\ }\textbf {\bibinfo {volume} {94}},\ \bibinfo {pages} {052336} (\bibinfo {year} {2016}{\natexlab{a}})}\BibitemShut {NoStop}%
\bibitem [{\citenamefont {Piani}\ \emph {et~al.}(2016)\citenamefont {Piani}, \citenamefont {Cianciaruso}, \citenamefont {Bromley}, \citenamefont {Napoli}, \citenamefont {Johnston},\ and\ \citenamefont {Adesso}}]{piani2016robustness}%
  \BibitemOpen
  \bibfield  {author} {\bibinfo {author} {\bibfnamefont {M.}~\bibnamefont {Piani}}, \bibinfo {author} {\bibfnamefont {M.}~\bibnamefont {Cianciaruso}}, \bibinfo {author} {\bibfnamefont {T.~R.}\ \bibnamefont {Bromley}}, \bibinfo {author} {\bibfnamefont {C.}~\bibnamefont {Napoli}}, \bibinfo {author} {\bibfnamefont {N.}~\bibnamefont {Johnston}},\ and\ \bibinfo {author} {\bibfnamefont {G.}~\bibnamefont {Adesso}},\ }\bibfield  {title} {\bibinfo {title} {Robustness of asymmetry and coherence of quantum states},\ }\href {https://journals.aps.org/pra/abstract/10.1103/PhysRevA.93.042107} {\bibfield  {journal} {\bibinfo  {journal} {Physical Review A}\ }\textbf {\bibinfo {volume} {93}},\ \bibinfo {pages} {042107} (\bibinfo {year} {2016})}\BibitemShut {NoStop}%
\bibitem [{\citenamefont {Kelly}\ \emph {et~al.}(2023)\citenamefont {Kelly}, \citenamefont {Poschinger}, \citenamefont {Schmidt-Kaler}, \citenamefont {Fisher},\ and\ \citenamefont {Marino}}]{kelly2023coherence}%
  \BibitemOpen
  \bibfield  {author} {\bibinfo {author} {\bibfnamefont {S.~P.}\ \bibnamefont {Kelly}}, \bibinfo {author} {\bibfnamefont {U.}~\bibnamefont {Poschinger}}, \bibinfo {author} {\bibfnamefont {F.}~\bibnamefont {Schmidt-Kaler}}, \bibinfo {author} {\bibfnamefont {M.}~\bibnamefont {Fisher}},\ and\ \bibinfo {author} {\bibfnamefont {J.}~\bibnamefont {Marino}},\ }\bibfield  {title} {\bibinfo {title} {Coherence requirements for quantum communication from hybrid circuit dynamics},\ }\href {https://www.scipost.org/SciPostPhys.15.6.250/pdf} {\bibfield  {journal} {\bibinfo  {journal} {SciPost Physics}\ }\textbf {\bibinfo {volume} {15}},\ \bibinfo {pages} {250} (\bibinfo {year} {2023})}\BibitemShut {NoStop}%
\bibitem [{\citenamefont {Tajima}\ and\ \citenamefont {Takagi}(2024)}]{tajima2024gibbs}%
  \BibitemOpen
  \bibfield  {author} {\bibinfo {author} {\bibfnamefont {H.}~\bibnamefont {Tajima}}\ and\ \bibinfo {author} {\bibfnamefont {R.}~\bibnamefont {Takagi}},\ }\bibfield  {title} {\bibinfo {title} {Gibbs-preserving operations requiring infinite amount of quantum coherence},\ }\href {https://arxiv.org/abs/2404.03479} {\bibfield  {journal} {\bibinfo  {journal} {arXiv preprint arXiv:2404.03479}\ } (\bibinfo {year} {2024})}\BibitemShut {NoStop}%
\bibitem [{\citenamefont {Horodecki}\ \emph {et~al.}(2009)\citenamefont {Horodecki}, \citenamefont {Horodecki}, \citenamefont {Horodecki},\ and\ \citenamefont {Horodecki}}]{horodecki2009quantum}%
  \BibitemOpen
  \bibfield  {author} {\bibinfo {author} {\bibfnamefont {R.}~\bibnamefont {Horodecki}}, \bibinfo {author} {\bibfnamefont {P.}~\bibnamefont {Horodecki}}, \bibinfo {author} {\bibfnamefont {M.}~\bibnamefont {Horodecki}},\ and\ \bibinfo {author} {\bibfnamefont {K.}~\bibnamefont {Horodecki}},\ }\bibfield  {title} {\bibinfo {title} {Quantum entanglement},\ }\href {https://journals.aps.org/rmp/abstract/10.1103/RevModPhys.81.865} {\bibfield  {journal} {\bibinfo  {journal} {Reviews of modern physics}\ }\textbf {\bibinfo {volume} {81}},\ \bibinfo {pages} {865} (\bibinfo {year} {2009})}\BibitemShut {NoStop}%
\bibitem [{\citenamefont {Vedral}\ \emph {et~al.}(1997)\citenamefont {Vedral}, \citenamefont {Plenio}, \citenamefont {Rippin},\ and\ \citenamefont {Knight}}]{vedral1997quantifying}%
  \BibitemOpen
  \bibfield  {author} {\bibinfo {author} {\bibfnamefont {V.}~\bibnamefont {Vedral}}, \bibinfo {author} {\bibfnamefont {M.~B.}\ \bibnamefont {Plenio}}, \bibinfo {author} {\bibfnamefont {M.~A.}\ \bibnamefont {Rippin}},\ and\ \bibinfo {author} {\bibfnamefont {P.~L.}\ \bibnamefont {Knight}},\ }\bibfield  {title} {\bibinfo {title} {Quantifying entanglement},\ }\href {https://journals.aps.org/prl/abstract/10.1103/PhysRevLett.78.2275} {\bibfield  {journal} {\bibinfo  {journal} {Physical Review Letters}\ }\textbf {\bibinfo {volume} {78}},\ \bibinfo {pages} {2275} (\bibinfo {year} {1997})}\BibitemShut {NoStop}%
\bibitem [{\citenamefont {Wang}\ and\ \citenamefont {Wilde}(2020)}]{wang2020cost}%
  \BibitemOpen
  \bibfield  {author} {\bibinfo {author} {\bibfnamefont {X.}~\bibnamefont {Wang}}\ and\ \bibinfo {author} {\bibfnamefont {M.~M.}\ \bibnamefont {Wilde}},\ }\bibfield  {title} {\bibinfo {title} {Cost of quantum entanglement simplified},\ }\href {https://journals.aps.org/prl/abstract/10.1103/PhysRevLett.125.040502} {\bibfield  {journal} {\bibinfo  {journal} {Physical Review Letters}\ }\textbf {\bibinfo {volume} {125}},\ \bibinfo {pages} {040502} (\bibinfo {year} {2020})}\BibitemShut {NoStop}%
\bibitem [{\citenamefont {Chen}\ \emph {et~al.}(2023)\citenamefont {Chen}, \citenamefont {Zhao},\ and\ \citenamefont {Wang}}]{chen2023near}%
  \BibitemOpen
  \bibfield  {author} {\bibinfo {author} {\bibfnamefont {R.}~\bibnamefont {Chen}}, \bibinfo {author} {\bibfnamefont {B.}~\bibnamefont {Zhao}},\ and\ \bibinfo {author} {\bibfnamefont {X.}~\bibnamefont {Wang}},\ }\bibfield  {title} {\bibinfo {title} {Near-term efficient quantum algorithms for entanglement analysis},\ }\href {https://journals.aps.org/prapplied/abstract/10.1103/PhysRevApplied.20.024071} {\bibfield  {journal} {\bibinfo  {journal} {Physical Review Applied}\ }\textbf {\bibinfo {volume} {20}},\ \bibinfo {pages} {024071} (\bibinfo {year} {2023})}\BibitemShut {NoStop}%
\bibitem [{\citenamefont {Zhu}\ \emph {et~al.}(2023{\natexlab{a}})\citenamefont {Zhu}, \citenamefont {Zhu},\ and\ \citenamefont {Wang}}]{zhu2023estimate}%
  \BibitemOpen
  \bibfield  {author} {\bibinfo {author} {\bibfnamefont {C.}~\bibnamefont {Zhu}}, \bibinfo {author} {\bibfnamefont {C.}~\bibnamefont {Zhu}},\ and\ \bibinfo {author} {\bibfnamefont {X.}~\bibnamefont {Wang}},\ }\bibfield  {title} {\bibinfo {title} {Estimate distillable entanglement and quantum capacity by squeezing useless entanglement},\ }\href {https://arxiv.org/abs/2303.07228} {\bibfield  {journal} {\bibinfo  {journal} {arXiv preprint arXiv:2303.07228}\ } (\bibinfo {year} {2023}{\natexlab{a}})}\BibitemShut {NoStop}%
\bibitem [{\citenamefont {Howard}\ and\ \citenamefont {Campbell}(2017)}]{howard2017application}%
  \BibitemOpen
  \bibfield  {author} {\bibinfo {author} {\bibfnamefont {M.}~\bibnamefont {Howard}}\ and\ \bibinfo {author} {\bibfnamefont {E.}~\bibnamefont {Campbell}},\ }\bibfield  {title} {\bibinfo {title} {Application of a resource theory for magic states to fault-tolerant quantum computing},\ }\href {https://journals.aps.org/prl/abstract/10.1103/PhysRevLett.118.090501} {\bibfield  {journal} {\bibinfo  {journal} {Physical review letters}\ }\textbf {\bibinfo {volume} {118}},\ \bibinfo {pages} {090501} (\bibinfo {year} {2017})}\BibitemShut {NoStop}%
\bibitem [{\citenamefont {Bravyi}\ \emph {et~al.}(2016)\citenamefont {Bravyi}, \citenamefont {Smith},\ and\ \citenamefont {Smolin}}]{bravyi2016trading}%
  \BibitemOpen
  \bibfield  {author} {\bibinfo {author} {\bibfnamefont {S.}~\bibnamefont {Bravyi}}, \bibinfo {author} {\bibfnamefont {G.}~\bibnamefont {Smith}},\ and\ \bibinfo {author} {\bibfnamefont {J.~A.}\ \bibnamefont {Smolin}},\ }\bibfield  {title} {\bibinfo {title} {Trading classical and quantum computational resources},\ }\href {https://journals.aps.org/prx/abstract/10.1103/PhysRevX.6.021043} {\bibfield  {journal} {\bibinfo  {journal} {Physical Review X}\ }\textbf {\bibinfo {volume} {6}},\ \bibinfo {pages} {021043} (\bibinfo {year} {2016})}\BibitemShut {NoStop}%
\bibitem [{\citenamefont {Wang}\ \emph {et~al.}(2019)\citenamefont {Wang}, \citenamefont {Wilde},\ and\ \citenamefont {Su}}]{wang2019quantifying}%
  \BibitemOpen
  \bibfield  {author} {\bibinfo {author} {\bibfnamefont {X.}~\bibnamefont {Wang}}, \bibinfo {author} {\bibfnamefont {M.~M.}\ \bibnamefont {Wilde}},\ and\ \bibinfo {author} {\bibfnamefont {Y.}~\bibnamefont {Su}},\ }\bibfield  {title} {\bibinfo {title} {Quantifying the magic of quantum channels},\ }\href {https://iopscience.iop.org/article/10.1088/1367-2630/ab451d} {\bibfield  {journal} {\bibinfo  {journal} {New Journal of Physics}\ }\textbf {\bibinfo {volume} {21}},\ \bibinfo {pages} {103002} (\bibinfo {year} {2019})}\BibitemShut {NoStop}%
\bibitem [{\citenamefont {Wang}\ \emph {et~al.}(2020)\citenamefont {Wang}, \citenamefont {Wilde},\ and\ \citenamefont {Su}}]{wang2020efficiently}%
  \BibitemOpen
  \bibfield  {author} {\bibinfo {author} {\bibfnamefont {X.}~\bibnamefont {Wang}}, \bibinfo {author} {\bibfnamefont {M.~M.}\ \bibnamefont {Wilde}},\ and\ \bibinfo {author} {\bibfnamefont {Y.}~\bibnamefont {Su}},\ }\bibfield  {title} {\bibinfo {title} {Efficiently computable bounds for magic state distillation},\ }\href {https://journals.aps.org/prl/abstract/10.1103/PhysRevLett.124.090505} {\bibfield  {journal} {\bibinfo  {journal} {Physical review letters}\ }\textbf {\bibinfo {volume} {124}},\ \bibinfo {pages} {090505} (\bibinfo {year} {2020})}\BibitemShut {NoStop}%
\bibitem [{\citenamefont {Bravyi}\ and\ \citenamefont {Haah}(2012)}]{bravyi2012magic}%
  \BibitemOpen
  \bibfield  {author} {\bibinfo {author} {\bibfnamefont {S.}~\bibnamefont {Bravyi}}\ and\ \bibinfo {author} {\bibfnamefont {J.}~\bibnamefont {Haah}},\ }\bibfield  {title} {\bibinfo {title} {Magic-state distillation with low overhead},\ }\href {https://journals.aps.org/pra/abstract/10.1103/PhysRevA.86.052329} {\bibfield  {journal} {\bibinfo  {journal} {Physical Review A}\ }\textbf {\bibinfo {volume} {86}},\ \bibinfo {pages} {052329} (\bibinfo {year} {2012})}\BibitemShut {NoStop}%
\bibitem [{\citenamefont {Zhu}\ \emph {et~al.}(2023{\natexlab{b}})\citenamefont {Zhu}, \citenamefont {Liu}, \citenamefont {Zhu},\ and\ \citenamefont {Wang}}]{zhu2023limitations}%
  \BibitemOpen
  \bibfield  {author} {\bibinfo {author} {\bibfnamefont {C.}~\bibnamefont {Zhu}}, \bibinfo {author} {\bibfnamefont {Z.}~\bibnamefont {Liu}}, \bibinfo {author} {\bibfnamefont {C.}~\bibnamefont {Zhu}},\ and\ \bibinfo {author} {\bibfnamefont {X.}~\bibnamefont {Wang}},\ }\bibfield  {title} {\bibinfo {title} {Limitations of classically-simulable measurements for quantum state discrimination},\ }\href {https://arxiv.org/abs/2310.11323} {\bibfield  {journal} {\bibinfo  {journal} {arXiv preprint arXiv:2310.11323}\ } (\bibinfo {year} {2023}{\natexlab{b}})}\BibitemShut {NoStop}%
\bibitem [{\citenamefont {Chitambar}\ \emph {et~al.}(2014)\citenamefont {Chitambar}, \citenamefont {Leung}, \citenamefont {Man{\v{c}}inska}, \citenamefont {Ozols},\ and\ \citenamefont {Winter}}]{chitambar2014everything}%
  \BibitemOpen
  \bibfield  {author} {\bibinfo {author} {\bibfnamefont {E.}~\bibnamefont {Chitambar}}, \bibinfo {author} {\bibfnamefont {D.}~\bibnamefont {Leung}}, \bibinfo {author} {\bibfnamefont {L.}~\bibnamefont {Man{\v{c}}inska}}, \bibinfo {author} {\bibfnamefont {M.}~\bibnamefont {Ozols}},\ and\ \bibinfo {author} {\bibfnamefont {A.}~\bibnamefont {Winter}},\ }\bibfield  {title} {\bibinfo {title} {Everything you always wanted to know about locc (but were afraid to ask)},\ }\href {https://link.springer.com/article/10.1007/s00220-014-1953-9} {\bibfield  {journal} {\bibinfo  {journal} {Communications in Mathematical Physics}\ }\textbf {\bibinfo {volume} {328}},\ \bibinfo {pages} {303} (\bibinfo {year} {2014})}\BibitemShut {NoStop}%
\bibitem [{\citenamefont {Nielsen}\ and\ \citenamefont {Chuang}(2010)}]{nielsen2010quantum}%
  \BibitemOpen
  \bibfield  {author} {\bibinfo {author} {\bibfnamefont {M.~A.}\ \bibnamefont {Nielsen}}\ and\ \bibinfo {author} {\bibfnamefont {I.~L.}\ \bibnamefont {Chuang}},\ }\href@noop {} {\emph {\bibinfo {title} {Quantum computation and quantum information}}}\ (\bibinfo  {publisher} {Cambridge university press},\ \bibinfo {year} {2010})\BibitemShut {NoStop}%
\bibitem [{\citenamefont {Coles}\ \emph {et~al.}(2016)\citenamefont {Coles}, \citenamefont {Metodiev},\ and\ \citenamefont {L{\"u}tkenhaus}}]{coles2016numerical}%
  \BibitemOpen
  \bibfield  {author} {\bibinfo {author} {\bibfnamefont {P.~J.}\ \bibnamefont {Coles}}, \bibinfo {author} {\bibfnamefont {E.~M.}\ \bibnamefont {Metodiev}},\ and\ \bibinfo {author} {\bibfnamefont {N.}~\bibnamefont {L{\"u}tkenhaus}},\ }\bibfield  {title} {\bibinfo {title} {Numerical approach for unstructured quantum key distribution},\ }\href {https://www.nature.com/articles/ncomms11712} {\bibfield  {journal} {\bibinfo  {journal} {Nature communications}\ }\textbf {\bibinfo {volume} {7}},\ \bibinfo {pages} {11712} (\bibinfo {year} {2016})}\BibitemShut {NoStop}%
\bibitem [{\citenamefont {Giovannetti}\ \emph {et~al.}(2011)\citenamefont {Giovannetti}, \citenamefont {Lloyd},\ and\ \citenamefont {Maccone}}]{giovannetti2011advances}%
  \BibitemOpen
  \bibfield  {author} {\bibinfo {author} {\bibfnamefont {V.}~\bibnamefont {Giovannetti}}, \bibinfo {author} {\bibfnamefont {S.}~\bibnamefont {Lloyd}},\ and\ \bibinfo {author} {\bibfnamefont {L.}~\bibnamefont {Maccone}},\ }\bibfield  {title} {\bibinfo {title} {Advances in quantum metrology},\ }\href {https://www.nature.com/articles/nphoton.2011.35} {\bibfield  {journal} {\bibinfo  {journal} {Nature photonics}\ }\textbf {\bibinfo {volume} {5}},\ \bibinfo {pages} {222} (\bibinfo {year} {2011})}\BibitemShut {NoStop}%
\bibitem [{\citenamefont {Fr{\"o}wis}\ and\ \citenamefont {D{\"u}r}(2011)}]{frowis2011stable}%
  \BibitemOpen
  \bibfield  {author} {\bibinfo {author} {\bibfnamefont {F.}~\bibnamefont {Fr{\"o}wis}}\ and\ \bibinfo {author} {\bibfnamefont {W.}~\bibnamefont {D{\"u}r}},\ }\bibfield  {title} {\bibinfo {title} {Stable macroscopic quantum superpositions},\ }\href {https://journals.aps.org/prl/abstract/10.1103/PhysRevLett.106.110402} {\bibfield  {journal} {\bibinfo  {journal} {Physical review letters}\ }\textbf {\bibinfo {volume} {106}},\ \bibinfo {pages} {110402} (\bibinfo {year} {2011})}\BibitemShut {NoStop}%
\bibitem [{\citenamefont {T{\'o}th}\ and\ \citenamefont {Apellaniz}(2014)}]{toth2014quantum}%
  \BibitemOpen
  \bibfield  {author} {\bibinfo {author} {\bibfnamefont {G.}~\bibnamefont {T{\'o}th}}\ and\ \bibinfo {author} {\bibfnamefont {I.}~\bibnamefont {Apellaniz}},\ }\bibfield  {title} {\bibinfo {title} {Quantum metrology from a quantum information science perspective},\ }\href {https://iopscience.iop.org/article/10.1088/1751-8113/47/42/424006} {\bibfield  {journal} {\bibinfo  {journal} {Journal of Physics A: Mathematical and Theoretical}\ }\textbf {\bibinfo {volume} {47}},\ \bibinfo {pages} {424006} (\bibinfo {year} {2014})}\BibitemShut {NoStop}%
\bibitem [{\citenamefont {Lostaglio}\ \emph {et~al.}(2015)\citenamefont {Lostaglio}, \citenamefont {Jennings},\ and\ \citenamefont {Rudolph}}]{lostaglio2015description}%
  \BibitemOpen
  \bibfield  {author} {\bibinfo {author} {\bibfnamefont {M.}~\bibnamefont {Lostaglio}}, \bibinfo {author} {\bibfnamefont {D.}~\bibnamefont {Jennings}},\ and\ \bibinfo {author} {\bibfnamefont {T.}~\bibnamefont {Rudolph}},\ }\bibfield  {title} {\bibinfo {title} {Description of quantum coherence in thermodynamic processes requires constraints beyond free energy},\ }\href {https://www.nature.com/articles/ncomms7383} {\bibfield  {journal} {\bibinfo  {journal} {Nature communications}\ }\textbf {\bibinfo {volume} {6}},\ \bibinfo {pages} {6383} (\bibinfo {year} {2015})}\BibitemShut {NoStop}%
\bibitem [{\citenamefont {Brandao}\ \emph {et~al.}(2013)\citenamefont {Brandao}, \citenamefont {Horodecki}, \citenamefont {Oppenheim}, \citenamefont {Renes},\ and\ \citenamefont {Spekkens}}]{brandao2013resource}%
  \BibitemOpen
  \bibfield  {author} {\bibinfo {author} {\bibfnamefont {F.~G.}\ \bibnamefont {Brandao}}, \bibinfo {author} {\bibfnamefont {M.}~\bibnamefont {Horodecki}}, \bibinfo {author} {\bibfnamefont {J.}~\bibnamefont {Oppenheim}}, \bibinfo {author} {\bibfnamefont {J.~M.}\ \bibnamefont {Renes}},\ and\ \bibinfo {author} {\bibfnamefont {R.~W.}\ \bibnamefont {Spekkens}},\ }\bibfield  {title} {\bibinfo {title} {Resource theory of quantum states out of thermal equilibrium},\ }\href {https://journals.aps.org/prl/abstract/10.1103/PhysRevLett.111.250404} {\bibfield  {journal} {\bibinfo  {journal} {Physical review letters}\ }\textbf {\bibinfo {volume} {111}},\ \bibinfo {pages} {250404} (\bibinfo {year} {2013})}\BibitemShut {NoStop}%
\bibitem [{\citenamefont {Gour}\ \emph {et~al.}(2015)\citenamefont {Gour}, \citenamefont {M{\"u}ller}, \citenamefont {Narasimhachar}, \citenamefont {Spekkens},\ and\ \citenamefont {Halpern}}]{gour2015resource}%
  \BibitemOpen
  \bibfield  {author} {\bibinfo {author} {\bibfnamefont {G.}~\bibnamefont {Gour}}, \bibinfo {author} {\bibfnamefont {M.~P.}\ \bibnamefont {M{\"u}ller}}, \bibinfo {author} {\bibfnamefont {V.}~\bibnamefont {Narasimhachar}}, \bibinfo {author} {\bibfnamefont {R.~W.}\ \bibnamefont {Spekkens}},\ and\ \bibinfo {author} {\bibfnamefont {N.~Y.}\ \bibnamefont {Halpern}},\ }\bibfield  {title} {\bibinfo {title} {The resource theory of informational nonequilibrium in thermodynamics},\ }\href {https://www.sciencedirect.com/science/article/abs/pii/S037015731500229X?via%3Dihub} {\bibfield  {journal} {\bibinfo  {journal} {Physics Reports}\ }\textbf {\bibinfo {volume} {583}},\ \bibinfo {pages} {1} (\bibinfo {year} {2015})}\BibitemShut {NoStop}%
\bibitem [{\citenamefont {Dana}\ \emph {et~al.}(2017)\citenamefont {Dana}, \citenamefont {D{\'\i}az}, \citenamefont {Mejatty},\ and\ \citenamefont {Winter}}]{dana2017resource}%
  \BibitemOpen
  \bibfield  {author} {\bibinfo {author} {\bibfnamefont {K.~B.}\ \bibnamefont {Dana}}, \bibinfo {author} {\bibfnamefont {M.~G.}\ \bibnamefont {D{\'\i}az}}, \bibinfo {author} {\bibfnamefont {M.}~\bibnamefont {Mejatty}},\ and\ \bibinfo {author} {\bibfnamefont {A.}~\bibnamefont {Winter}},\ }\bibfield  {title} {\bibinfo {title} {Resource theory of coherence: Beyond states},\ }\href {https://journals.aps.org/pra/abstract/10.1103/PhysRevA.95.062327} {\bibfield  {journal} {\bibinfo  {journal} {Physical Review A}\ }\textbf {\bibinfo {volume} {95}},\ \bibinfo {pages} {062327} (\bibinfo {year} {2017})}\BibitemShut {NoStop}%
\bibitem [{\citenamefont {D{\'\i}az}\ \emph {et~al.}(2018)\citenamefont {D{\'\i}az}, \citenamefont {Fang}, \citenamefont {Wang}, \citenamefont {Rosati}, \citenamefont {Skotiniotis}, \citenamefont {Calsamiglia},\ and\ \citenamefont {Winter}}]{diaz2018using}%
  \BibitemOpen
  \bibfield  {author} {\bibinfo {author} {\bibfnamefont {M.~G.}\ \bibnamefont {D{\'\i}az}}, \bibinfo {author} {\bibfnamefont {K.}~\bibnamefont {Fang}}, \bibinfo {author} {\bibfnamefont {X.}~\bibnamefont {Wang}}, \bibinfo {author} {\bibfnamefont {M.}~\bibnamefont {Rosati}}, \bibinfo {author} {\bibfnamefont {M.}~\bibnamefont {Skotiniotis}}, \bibinfo {author} {\bibfnamefont {J.}~\bibnamefont {Calsamiglia}},\ and\ \bibinfo {author} {\bibfnamefont {A.}~\bibnamefont {Winter}},\ }\bibfield  {title} {\bibinfo {title} {Using and reusing coherence to realize quantum processes},\ }\href {https://quantum-journal.org/papers/q-2018-10-19-100/} {\bibfield  {journal} {\bibinfo  {journal} {Quantum}\ }\textbf {\bibinfo {volume} {2}},\ \bibinfo {pages} {100} (\bibinfo {year} {2018})}\BibitemShut {NoStop}%
\bibitem [{\citenamefont {Chitambar}\ and\ \citenamefont {Gour}(2016{\natexlab{b}})}]{chitambar2016critical}%
  \BibitemOpen
  \bibfield  {author} {\bibinfo {author} {\bibfnamefont {E.}~\bibnamefont {Chitambar}}\ and\ \bibinfo {author} {\bibfnamefont {G.}~\bibnamefont {Gour}},\ }\bibfield  {title} {\bibinfo {title} {Critical examination of incoherent operations and a physically consistent resource theory of quantum coherence},\ }\href {https://journals.aps.org/prl/abstract/10.1103/PhysRevLett.117.030401} {\bibfield  {journal} {\bibinfo  {journal} {Physical review letters}\ }\textbf {\bibinfo {volume} {117}},\ \bibinfo {pages} {030401} (\bibinfo {year} {2016}{\natexlab{b}})}\BibitemShut {NoStop}%
\bibitem [{\citenamefont {Liu}\ \emph {et~al.}(2017)\citenamefont {Liu}, \citenamefont {Hu},\ and\ \citenamefont {Lloyd}}]{liu2017resource}%
  \BibitemOpen
  \bibfield  {author} {\bibinfo {author} {\bibfnamefont {Z.-W.}\ \bibnamefont {Liu}}, \bibinfo {author} {\bibfnamefont {X.}~\bibnamefont {Hu}},\ and\ \bibinfo {author} {\bibfnamefont {S.}~\bibnamefont {Lloyd}},\ }\bibfield  {title} {\bibinfo {title} {Resource destroying maps},\ }\href {https://journals.aps.org/prl/pdf/10.1103/PhysRevLett.118.060502} {\bibfield  {journal} {\bibinfo  {journal} {Physical review letters}\ }\textbf {\bibinfo {volume} {118}},\ \bibinfo {pages} {060502} (\bibinfo {year} {2017})}\BibitemShut {NoStop}%
\bibitem [{\citenamefont {Winter}\ and\ \citenamefont {Yang}(2016)}]{winter2016operational}%
  \BibitemOpen
  \bibfield  {author} {\bibinfo {author} {\bibfnamefont {A.}~\bibnamefont {Winter}}\ and\ \bibinfo {author} {\bibfnamefont {D.}~\bibnamefont {Yang}},\ }\bibfield  {title} {\bibinfo {title} {Operational resource theory of coherence},\ }\href {https://journals.aps.org/prl/abstract/10.1103/PhysRevLett.116.120404} {\bibfield  {journal} {\bibinfo  {journal} {Physical review letters}\ }\textbf {\bibinfo {volume} {116}},\ \bibinfo {pages} {120404} (\bibinfo {year} {2016})}\BibitemShut {NoStop}%
\bibitem [{\citenamefont {Jones}\ \emph {et~al.}(2023)\citenamefont {Jones}, \citenamefont {Skrzypczyk},\ and\ \citenamefont {Linden}}]{jones2023hadamard}%
  \BibitemOpen
  \bibfield  {author} {\bibinfo {author} {\bibfnamefont {B.~D.}\ \bibnamefont {Jones}}, \bibinfo {author} {\bibfnamefont {P.}~\bibnamefont {Skrzypczyk}},\ and\ \bibinfo {author} {\bibfnamefont {N.}~\bibnamefont {Linden}},\ }\bibfield  {title} {\bibinfo {title} {The hadamard gate cannot be replaced by a resource state in universal quantum computation},\ }\href {https://arxiv.org/pdf/2312.03515.pdf} {\bibfield  {journal} {\bibinfo  {journal} {arXiv preprint arXiv:2312.03515}\ } (\bibinfo {year} {2023})}\BibitemShut {NoStop}%
\bibitem [{\citenamefont {Theurer}\ \emph {et~al.}(2019)\citenamefont {Theurer}, \citenamefont {Egloff}, \citenamefont {Zhang},\ and\ \citenamefont {Plenio}}]{theurer2019quantifying}%
  \BibitemOpen
  \bibfield  {author} {\bibinfo {author} {\bibfnamefont {T.}~\bibnamefont {Theurer}}, \bibinfo {author} {\bibfnamefont {D.}~\bibnamefont {Egloff}}, \bibinfo {author} {\bibfnamefont {L.}~\bibnamefont {Zhang}},\ and\ \bibinfo {author} {\bibfnamefont {M.~B.}\ \bibnamefont {Plenio}},\ }\bibfield  {title} {\bibinfo {title} {Quantifying operations with an application to coherence},\ }\href {https://journals.aps.org/prl/pdf/10.1103/PhysRevLett.122.190405} {\bibfield  {journal} {\bibinfo  {journal} {Physical review letters}\ }\textbf {\bibinfo {volume} {122}},\ \bibinfo {pages} {190405} (\bibinfo {year} {2019})}\BibitemShut {NoStop}%
\bibitem [{\citenamefont {Regula}(2022)}]{regula2022probabilistic}%
  \BibitemOpen
  \bibfield  {author} {\bibinfo {author} {\bibfnamefont {B.}~\bibnamefont {Regula}},\ }\bibfield  {title} {\bibinfo {title} {Probabilistic transformations of quantum resources},\ }\href {https://journals.aps.org/prl/abstract/10.1103/PhysRevLett.128.110505} {\bibfield  {journal} {\bibinfo  {journal} {Physical Review Letters}\ }\textbf {\bibinfo {volume} {128}},\ \bibinfo {pages} {110505} (\bibinfo {year} {2022})}\BibitemShut {NoStop}%
\bibitem [{\citenamefont {Watrous}(2009)}]{watrous2009semidefinite}%
  \BibitemOpen
  \bibfield  {author} {\bibinfo {author} {\bibfnamefont {J.}~\bibnamefont {Watrous}},\ }\bibfield  {title} {\bibinfo {title} {Semidefinite programs for completely bounded norms},\ }\href {https://arxiv.org/abs/0901.4709} {\bibfield  {journal} {\bibinfo  {journal} {arXiv preprint arXiv:0901.4709}\ } (\bibinfo {year} {2009})}\BibitemShut {NoStop}%
\bibitem [{\citenamefont {Sacchi}(2005)}]{sacchi2005optimal}%
  \BibitemOpen
  \bibfield  {author} {\bibinfo {author} {\bibfnamefont {M.~F.}\ \bibnamefont {Sacchi}},\ }\bibfield  {title} {\bibinfo {title} {Optimal discrimination of quantum operations},\ }\href {https://journals.aps.org/pra/abstract/10.1103/PhysRevA.71.062340} {\bibfield  {journal} {\bibinfo  {journal} {Physical Review A}\ }\textbf {\bibinfo {volume} {71}},\ \bibinfo {pages} {062340} (\bibinfo {year} {2005})}\BibitemShut {NoStop}%
\bibitem [{\citenamefont {Zhao}\ \emph {et~al.}(2023)\citenamefont {Zhao}, \citenamefont {Zhang}, \citenamefont {Zhao},\ and\ \citenamefont {Wang}}]{zhao2023power}%
  \BibitemOpen
  \bibfield  {author} {\bibinfo {author} {\bibfnamefont {X.}~\bibnamefont {Zhao}}, \bibinfo {author} {\bibfnamefont {L.}~\bibnamefont {Zhang}}, \bibinfo {author} {\bibfnamefont {B.}~\bibnamefont {Zhao}},\ and\ \bibinfo {author} {\bibfnamefont {X.}~\bibnamefont {Wang}},\ }\bibfield  {title} {\bibinfo {title} {Power of quantum measurement in simulating unphysical operations},\ }\href {https://arxiv.org/abs/2309.09963} {\bibfield  {journal} {\bibinfo  {journal} {arXiv preprint arXiv:2309.09963}\ } (\bibinfo {year} {2023})}\BibitemShut {NoStop}%
\bibitem [{\citenamefont {Regula}\ \emph {et~al.}(2021)\citenamefont {Regula}, \citenamefont {Takagi},\ and\ \citenamefont {Gu}}]{regula2021operational}%
  \BibitemOpen
  \bibfield  {author} {\bibinfo {author} {\bibfnamefont {B.}~\bibnamefont {Regula}}, \bibinfo {author} {\bibfnamefont {R.}~\bibnamefont {Takagi}},\ and\ \bibinfo {author} {\bibfnamefont {M.}~\bibnamefont {Gu}},\ }\bibfield  {title} {\bibinfo {title} {Operational applications of the diamond norm and related measures in quantifying the non-physicality of quantum maps},\ }\href {https://quantum-journal.org/papers/q-2021-08-09-522/} {\bibfield  {journal} {\bibinfo  {journal} {Quantum}\ }\textbf {\bibinfo {volume} {5}},\ \bibinfo {pages} {522} (\bibinfo {year} {2021})}\BibitemShut {NoStop}%
\bibitem [{\citenamefont {Buscemi}\ and\ \citenamefont {Gour}(2017)}]{buscemi2017quantum}%
  \BibitemOpen
  \bibfield  {author} {\bibinfo {author} {\bibfnamefont {F.}~\bibnamefont {Buscemi}}\ and\ \bibinfo {author} {\bibfnamefont {G.}~\bibnamefont {Gour}},\ }\bibfield  {title} {\bibinfo {title} {Quantum relative lorenz curves},\ }\href {https://journals.aps.org/pra/abstract/10.1103/PhysRevA.95.012110} {\bibfield  {journal} {\bibinfo  {journal} {Physical Review A}\ }\textbf {\bibinfo {volume} {95}},\ \bibinfo {pages} {012110} (\bibinfo {year} {2017})}\BibitemShut {NoStop}%
\bibitem [{\citenamefont {Ishizaka}\ and\ \citenamefont {Plenio}(2005)}]{ishizaka2005multiparticle}%
  \BibitemOpen
  \bibfield  {author} {\bibinfo {author} {\bibfnamefont {S.}~\bibnamefont {Ishizaka}}\ and\ \bibinfo {author} {\bibfnamefont {M.~B.}\ \bibnamefont {Plenio}},\ }\bibfield  {title} {\bibinfo {title} {Multiparticle entanglement manipulation under positive partial transpose preserving operations},\ }\href {https://journals.aps.org/pra/abstract/10.1103/PhysRevA.71.052303} {\bibfield  {journal} {\bibinfo  {journal} {Physical Review A}\ }\textbf {\bibinfo {volume} {71}},\ \bibinfo {pages} {052303} (\bibinfo {year} {2005})}\BibitemShut {NoStop}%
\bibitem [{\citenamefont {Mani}\ and\ \citenamefont {Karimipour}(2015)}]{mani2015cohering}%
  \BibitemOpen
  \bibfield  {author} {\bibinfo {author} {\bibfnamefont {A.}~\bibnamefont {Mani}}\ and\ \bibinfo {author} {\bibfnamefont {V.}~\bibnamefont {Karimipour}},\ }\bibfield  {title} {\bibinfo {title} {Cohering and decohering power of quantum channels},\ }\href {https://journals.aps.org/pra/abstract/10.1103/PhysRevA.92.032331} {\bibfield  {journal} {\bibinfo  {journal} {Physical Review A}\ }\textbf {\bibinfo {volume} {92}},\ \bibinfo {pages} {032331} (\bibinfo {year} {2015})}\BibitemShut {NoStop}%
\bibitem [{\citenamefont {Bu}\ \emph {et~al.}(2017)\citenamefont {Bu}, \citenamefont {Kumar}, \citenamefont {Zhang},\ and\ \citenamefont {Wu}}]{bu2017cohering}%
  \BibitemOpen
  \bibfield  {author} {\bibinfo {author} {\bibfnamefont {K.}~\bibnamefont {Bu}}, \bibinfo {author} {\bibfnamefont {A.}~\bibnamefont {Kumar}}, \bibinfo {author} {\bibfnamefont {L.}~\bibnamefont {Zhang}},\ and\ \bibinfo {author} {\bibfnamefont {J.}~\bibnamefont {Wu}},\ }\bibfield  {title} {\bibinfo {title} {Cohering power of quantum operations},\ }\href {https://www.sciencedirect.com/science/article/abs/pii/S0375960117302621} {\bibfield  {journal} {\bibinfo  {journal} {Physics Letters A}\ }\textbf {\bibinfo {volume} {381}},\ \bibinfo {pages} {1670} (\bibinfo {year} {2017})}\BibitemShut {NoStop}%
\bibitem [{\citenamefont {Zhu}\ \emph {et~al.}(2017)\citenamefont {Zhu}, \citenamefont {Hayashi},\ and\ \citenamefont {Chen}}]{zhu2017coherence}%
  \BibitemOpen
  \bibfield  {author} {\bibinfo {author} {\bibfnamefont {H.}~\bibnamefont {Zhu}}, \bibinfo {author} {\bibfnamefont {M.}~\bibnamefont {Hayashi}},\ and\ \bibinfo {author} {\bibfnamefont {L.}~\bibnamefont {Chen}},\ }\bibfield  {title} {\bibinfo {title} {Coherence and entanglement measures based on r{\'e}nyi relative entropies},\ }\href {https://iopscience.iop.org/article/10.1088/1751-8121/aa8ffc} {\bibfield  {journal} {\bibinfo  {journal} {Journal of Physics A: Mathematical and Theoretical}\ }\textbf {\bibinfo {volume} {50}},\ \bibinfo {pages} {475303} (\bibinfo {year} {2017})}\BibitemShut {NoStop}%
\end{thebibliography}%


\vspace{2cm}
\onecolumngrid
\vspace{2cm}
\begin{center}
{\textbf{\large Appendix for Probabilistic channel simulation of coherence}}
\end{center}

\appendix

\section{Robustness of a state}\label{appen:Robustness of a state}
\begin{definition}{\rm Robustness of state}
    Given an arbitrary quantum state $\rho$, the robustness of a state is defined as ~\cite{napoli2016robustness, piani2016robustness}
    \begin{equation}
        C_R(\rho)=\min_{\tau}\Big\{s\ge 0 \Big| \frac{\rho+s\tau}{1+s}=\sigma \in \cI\Big\},
    \end{equation}
    where $\cI$ refers to the set of free states.
\end{definition}
The robustness of the state quantifies the minimum amount of another state \update{$\tau$ required such that its mixed state $(\rho + s \tau) / (1+s)$ is the incoherent state. We need to note that the added state $\tau$ is not necessarily a free state.} Let us denote the $s^*$ to be the optimal value of $s$, and the corresponding states are denoted as $\tau^*$ and $\sigma^*$. \update{Then} $C_R(\rho)=s^*$, and
\begin{equation}
    \rho = (1+C_R(\rho))\sigma^* - C_R(\rho)\tau^*,
\end{equation}
is said to realize an optimal pseudo mixture for $\rho$. It can be characterized by ~\cite{piani2016robustness}
\begin{equation}
    1 + C_R(\rho) = \min\{\lambda| \rho\le \lambda\sigma, ~\sigma\in\cI\},
\end{equation}
\reupdate{where $\cI$ denotes the set of incoherent states and $\lambda$ is real-valued.} 

\section{SDP for ROC of channel}\label{appen:SDP}
\noindent\textbf{Definition~\ref{def:robustness_of_channel}}
{\rm \cite{diaz2018using}}
    \textit{The robustness of coherence of a quantum channel $\cN$, $C_R(\cN)$, is defined by }
    \begin{equation}\label{eq:app_robustness_of_channel}
        1 + C_R(\cN) := \min\{\lambda:~\cN\le\lambda \cM, \cM \in {\rm MIO}\}.
    \end{equation}

The ROC of a channel can be computed efficiently by SDPs which is shown as follows~\cite{diaz2018using}:
\begin{subequations}\label{eq:app_robustness_2}
    \begin{align}
        1 + C_R(\cN) = \min\quad &  \lambda\nonumber\\
        {\rm s.t.}\quad & 
        J_\cN \le J_{\cM},\label{eq:app_robustness_2a}\\
        &\tr_B[J_{\cM}] = \lambda \mathbb{I}_A,\label{eq:app_robustness_2b}\\
        &\tr_A[J_{\cM} (\proj{i}^T\ox\mathbb{I})] = \Delta(\tr_A[J_{\cM} (\proj{i}^T\ox\mathbb{I})]),\quad \forall \proj{i}\label{eq:app_robustness_2c}.
    \end{align}
\end{subequations}
where $J_\cN$ and $J_{\cM}$ are \Choi matrices of channel $\cN$ and $\cM$, respectively. Eq.~\eqref{eq:app_robustness_2a} corresponds to $\cN\le\lambda\cM$. Note that $\cN$ is completely positive, which implies that $J_{\cM'}\ge J_\cN\ge 0$, and that $\cM'$ is a completely positive map. Eq.~\eqref{eq:app_robustness_2b} implies that the channel $\cM'$ is a trace scaling map. Eq.~\eqref{eq:app_robustness_2c} comes from the definition of MIO, which is $\cM(\rho)\in\cI$, for any free state $\rho$.

\noindent\textbf{Definition~\ref{def:robustness_of_channel_smooth}}{\rm ~\cite{diaz2018using}}
    \textit{The smoothed version of a robustness of channel is called the $\varepsilon$ robustness of coherence of the channel, which is defined by }
    \begin{equation}
        C_R^\varepsilon(\cN) := \min\Big\{C_R(\cL):~\frac{1}{2}\|\cN-\cL\|_\diamond\le\varepsilon\Big\}.
    \end{equation}

The $\varepsilon$-robustness of coherence of a channel can be computed efficiently by SDPs which is shown as follows~\cite{diaz2018using}.
\begin{subequations}
    \begin{align}
        1 + C_R^\varepsilon(\cN) = \min\quad &  \lambda\nonumber\\
        {\rm s.t.}\quad &  
        J_\cM \ge J_\cL\label{eq:app_robustness_4a}\\
        &\tr_B[J_{\cM}] = \lambda \mathbb{I}_A,\label{eq:app_robustness_4b}\\
        &\tr_A[J_{\cM} (\proj{i}^T\ox\mathbb{I})] = \Delta(\tr_A[J_{\cM} (\proj{i}^T\ox\mathbb{I})]),\quad \forall \proj{i}\label{eq:app_robustness_4c},\\
        & V\ge J_\cL - J_\cN \label{eq:app_robustness_4d},\\
        & \tr_B[V] \le \varepsilon \mathbb{I}_A,\label{eq:app_robustness_4e}\\
        & \tr_B[J_\cL] = \mathbb{I}_A,\label{eq:app_robustness_4f}\\
        & J_\cL \ge 0, ~ V\ge 0.\label{eq:app_robustness_4g}
    \end{align}
\end{subequations}

Eq.~\eqref{eq:app_robustness_4a}, Eq.~\eqref{eq:app_robustness_4b} and Eq.~\eqref{eq:app_robustness_4c} are the same as Eq.~\eqref{eq:app_robustness_2}. The rest of the constraints, i.e., Eq.~\eqref{eq:app_robustness_4d}, Eq.~\eqref{eq:app_robustness_4e}, Eq.~\eqref{eq:app_robustness_4f}, Eq.~\eqref{eq:app_robustness_4g} corresponds to $\frac{1}{2}\|\cN-\cL\|_\diamond\le\varepsilon$.

\section{Random channel simulation with DIO}\label{App:Random_channel}
The target channel $\cN$ used in the numerical experiment of channel simulation with DIO (Fig.~\ref{fig:prob_sim_random_channel_unitary_DIO}) is a random two-qubit rank-4 real channel, of which the \Choi matrix $J_\cN$ is 

\begin{align}
    &J_\cN=\nonumber\\
    &\begin{pmatrix}
    \begin{smallmatrix}
    0.1425&-0.0717&0.0663&-0.0582&0.1640&0.0037&-0.0010&0.0504&0.0095&0.0560&0.0765&-0.1369&-0.0963&-0.1184&0.0484&-0.0148\\
   -0.0717&0.2731&-0.0653&0.0888&-0.1178&-0.0300&0.0830&-0.2387&0.2253&-0.1143&0.0338&-0.0622&0.1673&0.2035&-0.3014&-0.1055\\
    0.0663&-0.0653&0.2067&0.0887&0.1565&-0.0341&-0.0497&0.0280&0.1282&-0.1035&0.0352&0.0825&-0.1642&-0.1078&0.0018&0.0301\\
   -0.0582&0.0888&0.0887&0.3776&-0.0611&-0.2899&-0.2548&-0.0843&0.1231&-0.1234&0.1396&0.0695&-0.1340&-0.0933&-0.1351&-0.1089\\
    0.1640&-0.1178&0.1565&-0.0611&0.2334&0.0313&0.0094&0.0779&0.0515&0.0115&0.0580&-0.0741&-0.1535&-0.1478&0.0720&0.0257\\
    0.0037&-0.0300&-0.0341&-0.2899&0.0313&0.2686&0.2506&0.0248&-0.0198&0.0210&-0.1566&0.0291&0.1374&0.1432&0.0514&0.1116\\
   -0.0010&0.0830&-0.0497&-0.2548&0.0094&0.2506&0.2845&-0.0765&0.0879&-0.0219&-0.1227&-0.0273&0.1929&0.2113&-0.0801&0.0578\\
    0.0504&-0.2387&0.0280&-0.0843&0.0779&0.0248&-0.0765&0.2135&-0.2269&0.1179&-0.0351&0.0502&-0.1302&-0.1712&0.2746&0.0904\\
    0.0095&0.2253&0.1282&0.1231&0.0515&-0.0198&0.0879&-0.2269&0.3708&-0.2111&0.0589&-0.0127&0.0375&0.1278&-0.3174&-0.0736\\
    0.0560&-0.1143&-0.1035&-0.1234&0.0115&0.0210&-0.0219&0.1179&-0.2111&0.1715&0.0095&-0.1161&-0.0078&-0.0855&0.1671&0.0079\\
    0.0765&0.0338&0.0352&0.1396&0.0580&-0.1566&-0.1227&-0.0351&0.0589&0.0095&0.1489&-0.1334&-0.0962&-0.1113&-0.0639&-0.1010\\
   -0.1369&-0.0622&0.0825&0.0695&-0.0741&0.0291&-0.0273&0.0502&-0.0127&-0.1161&-0.1334&0.3088&-0.0268&0.0376&0.0640&0.1119\\
   -0.0963&0.1673&-0.1642&-0.1340&-0.1535&0.1374&0.1929&-0.1302&0.0375&-0.0078&-0.0962&-0.0268&0.2532&0.2514&-0.1346&-0.0071\\
   -0.1184&0.2035&-0.1078&-0.0933&-0.1478&0.1432&0.2113&-0.1712&0.1278&-0.0855&-0.1113&0.0376&0.2514&0.2868&-0.1947&0.0025\\
    0.0484&-0.3014&0.0018&-0.1351&0.0720&0.0514&-0.0801&0.2746&-0.3174&0.1671&-0.0639&0.0640&-0.1346&-0.1947&0.3599&0.1212\\
   -0.0148&-0.1055&0.0301&-0.1089&0.0257&0.1116&0.0578&0.0904&-0.0736&0.0079&-0.1010&0.1119&-0.0071&0.0025&0.1212&0.1001\\
    \end{smallmatrix}
\end{pmatrix}.\nonumber
\end{align}

We can easily check that $\Delta\circ\cN \neq \Delta\circ\cN\circ\Delta$, implying that the target channel $\cN$ cannot be exactly simulated by DIO both deterministically and probabilistically.
\end{document}